\theoremstyle{definition}
\newcommand{\bra}[1]{\left\langle#1\right\rvert}
\newcommand{\ket}[1]{\left\lvert#1\right\rangle}
\newcommand{\ketbra}[2]{\left\lvert{#1}\middle\rangle\!\middle\langle{#2}\right\rvert}
\newcommand{\braket}[2]{\left\langle{#1}\middle\vert{#2}\right\rangle}
\newcommand{\tr}{\mathrm{Tr}}
\newcommand{\ptr}[2]{\mathrm{Tr}_{#1}\left( #2\right)}
\newcommand{\set}[1]{\left\lbrace #1\right\rbrace}
\newcommand{\expect}[1]{\langle #1\rangle}
\DeclareMathOperator*{\argmax}{argmax}
\definecolor{cream}{rgb}{1.0, 0.99, 0.82}
\definecolor{celadon}{rgb}{0.67, 0.88, 0.69}
\definecolor{beaublue}{rgb}{0.74, 0.83, 0.9}
\definecolor{shadecolor}{rgb}{1.0, 0.99, 0.82}
\newcommand{\PhysMQ}{\affiliation{%
    Department of Physics and Astronomy,
    Macquarie University,
    Sydney NSW, Australia.}}
\newtheorem{definition}{Definition}
\newtheorem{proposition}{Proposition}
\newtheorem{theorem}{Theorem}
\newtheorem{corollary}{Corollary}
\newcommand{\EQuSMQ}{\affiliation{%
    Centre for Engineered Quantum Systems,
    Macquarie University,
    Sydney NSW, Australia.}}
\newcommand{\PhysSU}{\affiliation{%
    Department of Physics,
    Stockholm University,
    Stockholm, Sweden.}}
\newcommand{\PhysUQ}{\affiliation{%
    School of Mathematics and Physics,
    University of Queensland,
    St. Lucia QLD, Australia.}}
\newcommand{\EQuSUQ}{\affiliation{%
    Centre for Engineered Quantum Systems,
    University of Queensland,
    St. Lucia QLD, Australia.}}
\newcommand{\FAU}{\affiliation{%
Friedrich-Alexander University Erlangen-N\"urnberg (FAU), Department of Physics, Erlangen,Germany.}}
\begin{document}

\title{A Resource Theory of Quantum Measurements}

\author{Thomas Guff} 
\email[]{thomas.guff@students.mq.edu.au}
\PhysMQ \EQuSMQ \PhysSU
\author{Nathan A. McMahon} \FAU \PhysUQ \EQuSUQ \PhysMQ \EQuSMQ 
\author{Yuval R.\ Sanders} \PhysMQ 
\author{Alexei Gilchrist} \PhysMQ \EQuSMQ

\date{\today}

\begin{abstract}
Resource theories are broad frameworks that capture how useful objects are in performing specific tasks.
In this paper we devise a formal resource theory quantum measurements, focusing on the ability of a measurement to acquire information. 
The objects of the theory are equivalence classes of positive operator-valued measures (POVMs), and the free transformations are changes to a measurement device that can only deteriorate its ability to report information about a physical system.
We show that catalysis and purification, protocols that are possible in other
resource theories, are impossible in our resource theory for quantum
measurements. Standard measures of information gain are shown to be resource monotones, and the resource theory is applied to the task of quantum state discrimination.
\end{abstract}

\maketitle

\section{Introduction}

Intuitively it is clear that some measurements are more useful than others. 
If the primary task of a quantum measurement is to gain information about a system, a projective measurement should be considered more useful than rolling a die to obtain the `outcome'. 
So perhaps it is reasonable to expect that the usefulness, or resourcefulness, of a given measurement should be able to be quantified for a broad range of tasks. 
The natural setting for such a quantification is a resource theory \cite{Chitambar2019,2016coecke59,Fritz2015,Brandao2015}.

A resource theory is an agent-centric theoretical framework that characterises the possible transformations that can be performed `for free' on a system. 
Given a particular state of the system there are typically a limited number of other states which can be freely accessed. 
The limited transformations can introduce irreversibility, since if we transform from state $A$ into state $B$, it may not be possible to transform $B$ back into $A$. 
In this case $A$ is more resourceful than $B$, since $A$ can accomplish all tasks that $B$ can accomplish. There may be some states which cannot be irreversibly changed using free transformations. These are the least resourceful states, also known as the \emph{free states}. 

Given any two states of a system, it will not always be possible to freely convert one into the other. 
This can lead to interesting processes such as catalysis and purification, where we may exploit resourceful states to enable previously inaccessible transformations (either retaining or consuming the resourceful state in the process).
Two primary questions a resource theory aims to answer are about which resources can be interconverted, and what measures of resourcefulness can reflect and describe this. 
In general the resource theory reflects a pre-order or partial order between states, so a single measure of resourcefulness will not suffice. 
In these cases we will need a set of measures that describe the resourcefulness of the state to accomplish this second task.

In this paper, we develop a resource theory of quantum measurements. Our theory
is derived from two key operations a user could perform that should never improve the measurement:
an operation whereby two outcomes are \emph{confused}, and an operation whereby a redundant outcome can be \emph{made up}.

We show that this formulation leads to a resource theory that lacks
catalysis and purification, in contrast with more familiar resource theories such as entanglement.
To demonstrate the validity of our approach, we further show that extant
measures of the information gain of a measurement are monotones in our theory 
and that `worse' measurements in our theory leads to worse performance for
state discrimination tasks. This joins a growing literature using discrimination games to as tasks to which resourceful states out-perform free states in resource theories \cite{Takagi2019,Oszmaniec2019,Carmeli2019,2019skrzypczyk,Uola2019,Skrzypczyk2018}

The origins of resource theories arguably lie in entanglement theory \cite{Horodecki2009, Nielsen2010}. 
As a resource theory, we consider two parties, Alice and Bob, who have access to pure bipartite states and can perform local operations and classical communication (LOCC) for free. As a consequence entangled states become resourceful, as they allow operations which cannot be performed with LOCC and product states alone, such as quantum teleportation \cite{Bennett1993}.

The idea of classical thermodynamics as a resource theory began with the work of Lieb and Yngvason \cite{Lieb1999}. 
They describe macroscopic systems in terms of \emph{adiabatic accessibility}: the states of the system which are accessible from a given state. Without any explicit reference to heat, temperature or thermodynamic cycles, they are able to derive that the possible transformations on a system are characterised by a unique additive function, the entropy.

A simple but illustrative resource theory is the resource theory of non-uniformity \cite{Gour2015}, where maximally mixed states are considered free and the free operations include appending maximally mixed states, global unitaries, and discarding subsystems.
This resource theory classifies quantum states as resourceful the further their spectra are from being uniform, or maximally mixed.

The resource theory structure is very general, and been developed in the abstract \cite{Chitambar2019,2016coecke59,Fritz2015,Brandao2015} as well as well as finding a wide range of applications within quantum mechanics. Further examples include quantum thermodynamics \cite{Goold2016}, reference frames and symmetry \cite{Bartlett2007}, coherence \cite{Streltsov2017}, and knowledge \cite{DelRio2015}.

Since the free operations in the resource theory connect states as a partial order or pre-order, there is a natural composition. Consequently it is perhaps not surprising that the mathematical structure of resource theories was provided in the language of category theory \cite{2016coecke59,Fritz2015} (specifically, \emph{a symmetric monoidal category}), where morphisms from one object to another correspond to free transformations between resource states.

In this paper the `states' or objects of the resource theory will be equivalence classes of positive operator-valued measures (POVMs). The free transformations derive from the ordering on POVMs that results from the ability to make up or confuse measurement outcomes, which cannot improve the ability of a measurement device gain information about the system. That is, our free transformations are a subset of \emph{classical processing} operations on POVMs, sometimes known as \emph{post-processing}. Classical processing on POVMs has been widely studied \cite{1990martens255,Uffink1994,Oszmaniec2017,Oszmaniec2018,Guerini2017,
Kuramochi2015,Dorofeev1997,Heinonen2005,Heinosaari2019,Jencova2008,Buscemi2005,
2011haapasalo1751}, often in the context of measurement compatibility \cite{Carmeli2019,2019skrzypczyk,Uffink1994,Martens1990},  but not in the context of a resource theory with a tensor product structure.
These works often focus on simulating POVMs using projection-valued measurements (or some standard set of POVMs). These theories address the questions of how an experimentalist may simulate a more complicated measurement operation by sampling from various measurement devices which are simpler, or cheaper, to build. In contrast, we are interested in quantifying the ordering of the quality of individual measurement devices.

There is  a growing literature considering quantum measurements from a resource theoretic perspective, similar to our work.
Bischof \emph{et al.} \cite{2018bischof1812.00018} study families of resource theories of coherence defined with respect to general POVMs. Each POVM defines such a family, and can be related to the usual theory of coherence through a Naimark extention of the measurement to a projective measurement. The focus however is on the role of coherence rather than the structure of the set of POVMs.  
Buscemi \emph{et al.} \cite{Buscemi2005} study the ordering relation of classical post-processing to compare it with the ordering relation of pre-processing with a quantum channel.
Skrzypczyk and Linden \cite{Skrzypczyk2018} use resource-theoretic criteria to construct a measure of robustness --- the amount of noise that has to be added before a measurement becomes uninformative, as a quantification of the informativeness of a measurement.
This measure is non-increasing under stochastic combinations of POVM elements, and can be interpreted as the advantage gained in a discrimination game. They also show that it forms a complete set of monotones when taken over all possible discrimination games. They introduce some of the key elements that we independently motivate and formally place in the framework of a resource theory. Skrzypczyk \emph{et al.} \cite{2019skrzypczyk} go further and show that this measure is also important in measurement incompatibility which forms a resource for discrimination tasks.

However, quantum measurements have yet to be given the explicitly formal structure of a resource theory, which we do in this paper. In particular, we consider the tensor product structure, representing performing multiple measurements simultaneously; leading to our results on catalysis and purification.

We further show that many measures of information gain of a quantum measurement, not developed in the context of resource theories, are in fact resource monotones.

Although this resource theory focuses on the ability of measurements to gain information from quantum systems, measurements have many more uses, for example state preparation and quantum control. In addition, quantum measurements are more than just the POVMs; since they cause a back-action on the system. 
Thus there are many possible extensions for a resource theory of measurement: we consider this work as providing some basic structure and results, as well as interpreting previous results such as measures of information gain. This allows for future work to expand the ordering to include a wider variety of uses of quantum measurements.

\section{Partial Order on Measurements}

We begin by defining a partial order on the set of measurements; where by a \emph{measurement} we mean a POVM. That is, an ordered collection of positive semi-definite operators $E_i\ge 0$, one associated with each measurement outcome. The operators satisfy 
the completeness relation
\begin{equation}
\sum_{i=1}^{n} E_{i} = \mathbbm{1}.
\end{equation}
Given a quantum system in state $\rho$, the probability of measuring outcome $i$ is $p(E_{i})=\tr(E_i \rho)$.
We will often write the POVM as a vector of POVM elements,
\begin{equation}
\vec{E} = \left(E_{1},\dots,E_{n}\right),
\end{equation}
in order to succinctly specify transformations of the POVM and to allow the possibility of identical POVM elements; though no formal vector space should be inferred.

We imagine a mischievous gremlin that can modify a measurement device, rewiring the measurement outcomes according to two operations that \emph{a priori} can not improve the device but may deteriorate it (see Fig.~\ref{fig:gremlin}):
\begin{enumerate}
    
    \item \emph{Making up outcomes:} the gremlin can split one outcome into many; that is it can duplicate POVM elements with probabilistic weights, for example:
    \begin{equation}\label{eq:Making_Up_Outcomes}
    \left(E_{1},E_{2},E_{3}\right)\rightarrow \left(E_{1},E_{2},p_{1}E_{3},p_{2}E_{3},p_{3}E_{3}\right),
    \end{equation}
	where $p_{1}+p_{2}+p_{3}=1$.
	More generally, if the gremlin splits each outcome $i$ according to the probability distribution $\vec{P}_{i}$, then the POVM $\vec{E}$ is acted on by a matrix $S$ which is the (matrix) direct sum of these probability vectors
    \begin{equation}\label{eq:op1}
        S = \bigoplus_{i=1}^{n} \vec{P}_{i}.
    \end{equation}
    That is to say, $S$ is a block-diagonal matrix with diagonal blocks that
    consist of one column each.

    \item \emph{Confusing outcomes:} the gremlin can deterministically combine measurement outcomes; and therefore POVM elements. For example:
\begin{equation}
\left(E_{1},E_{2},E_{3}\right)\rightarrow \left(E_{1}+E_{2},E_{3}\right). \label{eq:exop2}
\end{equation}
    That is multiple outcomes may get reported as the same.
    More generally, this transformation on $\vec{E}$ is represented by a
    column stochastic matrix $C$ where each column has a single non-zero entry. These are sometimes called `deterministic matrices'.
\end{enumerate}

We should regard gremlin operations of the first kind as information preserving since they are reversible by operations of the second kind. If $\ell_{i}$ is the length of the probability vector $\vec{P}_{i}$, then we can define the following deterministic matrix in block form
\begin{equation}
R_{S} =
\begin{pmatrix}
R_{1} & \dots & R_{n}
\end{pmatrix}
\end{equation}
where the submatrix $R_{i}$ contains $\ell_{i}$ columns, and contains ones in the $i^{\text{th}}$ row and zeros elsewhere. We see that $R_{S}S = I_{n}$; $R_{S}$ reverses the effect of $S$. The gremlin action of the second kind is typically irreversible, and so we should regard it is informationally destructive. If any POVM elements are proportional, combining them is reversible. In appendix~\ref{app:irreversible} we show that combining POVM elements which are not proportional is irreversible. 

These two operations are both represented by column stochastic matrices. We can show that these two operations are capable of producing \emph{any} column stochastic mixture on a POVM; so the effect of the gremlin is to stochastically scramble the POVM it acts upon. 
\begin{proposition}
Suppose $P$ is an $m\times n$ column stochastic matrix. Then there exists a $mn \times
n$ column stochastic matrix $S$ of the form of \eqref{eq:op1} and a $m \times mn$
column stochastic matrix $C$ whose entries are all either $0$ or $1$ such that $P = CS$.
\end{proposition}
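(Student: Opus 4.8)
The plan is to construct $S$ and $C$ explicitly from the entries of $P$ and then verify $P = CS$ by a direct block-matrix computation. Write $P = (P_1 \mid \dots \mid P_n)$ in terms of its columns, where each $P_j$ is a probability vector of length $m$ (it is a probability vector precisely because $P$ is column stochastic). The key observation is that a single column $P_j = (p_{1j},\dots,p_{mj})^{T}$ can be realised as follows: first "make up" $m$ outcomes from the $j$th outcome using the probability vector $P_j$ itself, and then "confuse" those $m$ fresh outcomes back down to the $m$ target outcomes via the identity pattern. Doing this in parallel for all $n$ input outcomes produces $mn$ intermediate outcomes, which matches the stated dimension.

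Concretely, I would set $S = \bigoplus_{j=1}^{n} P_j$, so $S$ is the $mn \times n$ block-diagonal matrix whose $j$th diagonal block is the single column $P_j$; this is manifestly of the form \eqref{eq:op1} and is column stochastic since each $P_j$ sums to one. For $C$, I would take the $m \times mn$ matrix built from $n$ horizontally-stacked copies of the $m \times m$ identity, $C = (I_m \mid I_m \mid \dots \mid I_m)$. Every column of $C$ is a standard basis vector, hence has a single nonzero entry equal to $1$, and each column sums to one, so $C$ is a deterministic column stochastic matrix of the required type. The only mild subtlety is bookkeeping: one should note that confusing $mn$ outcomes down to $m$ is a composition of confusion operations of the elementary form \eqref{eq:exop2}, but since any deterministic (single-nonzero-entry-per-column, column stochastic) matrix is realisable by iterated outcome-combining, $C$ is a legitimate gremlin operation of the second kind.

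It then remains to check $CS = P$. Partition $S$ conformally with $C$: the $j$th block-column of $C$ is $I_m$, and the $j$th block-row of $S$ restricted to its $j$th block-column is $P_j$ (with zeros in the other block-columns by block-diagonality). Hence the product picks out $\sum_{j} (\text{$j$th block of }C)(\text{$j$th block-column of }S)$, and the only surviving term in column $j$ of the result is $I_m P_j = P_j$, giving $CS = (P_1 \mid \dots \mid P_n) = P$. This is the whole argument.

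I do not anticipate a genuine obstacle here; the statement is essentially a normal-form observation and the construction above is forced once one recognises that each column of $P$ individually requires exactly a "make up" step (to create the weights) followed by a "confuse" step (to relabel). The one place to be careful in writing it up is to be explicit that the dimension $mn$ is not tight — it is simply the convenient choice in which each of the $n$ inputs is split into $m$ pieces — and to confirm that $C$ as written genuinely lies in the allowed class (columns with a single $1$), which it does. A secondary point worth a sentence is that $S$ and $C$ need not be unique; any $S$ of the form \eqref{eq:op1} with enough rows and a compatible deterministic $C$ works, so the proposition only asserts existence.
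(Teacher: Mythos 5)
Your construction is exactly the paper's: $S=\bigoplus_{j=1}^{n}P_{j}$ and $C=(I_{m}\mid\dots\mid I_{m})$, with the same block-multiplication check (which the paper simply calls ``a simple calculation''). The proposal is correct and takes essentially the same approach.
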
 
\begin{proof}
If we denote by $\vec{P}_{i}$ the $i^{\text{th}}$ column of $P$, then define $S$ as
\begin{equation*}
S = \bigoplus_{i=1}^{n} \vec{P}_{i},
\end{equation*}
where the direct sum is the matrix direct sum (as opposed to the vector direct sum). We define $C$ in block form
\begin{equation}
\begin{pmatrix}
C_{1} & \dots & C_{n}
\end{pmatrix},
\end{equation}
where $C_{i} = I_{m}$: the $m\times m$ identity matrix. A simple calculation shows $P=CS$.
\end{proof}

So the gremlin can take the POVM \mbox{$\vec{E} = \left(E_{1},\dots,E_{n}\right)$} and transform it into any POVM \mbox{$\vec{F}=\left(F_{1},\dots,F_{m}\right)$} whose elements have the form
\begin{equation}
F_{i} = \sum_{j=1}^{n} \Pr\left(F_{i}|E_{j}\right)E_{j},
\end{equation}
and $\vec{E}$ should be regarded as at least as good as, if not better than $\vec{F}$ because $\vec{F}$ can be obtained from $\vec{E}$ by making up and confusing outcomes.

Thus we can impose the following order:
\begin{figure}[t]
\includegraphics[scale=0.6]{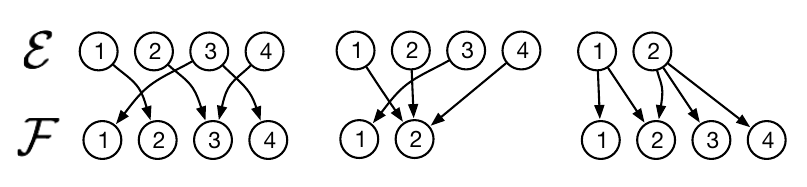}
\caption{Diagram of example free transformations on quantum measurements. The gremlin can make up outcomes and confuse outcomes, or any combination of these two operations. Making up outcomes is reversible; confusing outcomes is only reversible when confusing POVM elements which are proportional to each other. The gremlin can change the number of outcomes of the POVM.}
\label{fig:gremlin}
\end{figure}
\begin{definition}[Order Relation]\label{def:preorder}
For two POVMs, $\vec{E}$ and $\vec{F}$; we say that $\vec{E}\geq \vec{F}$ if $\vec{E}$ can be transformed into $\vec{F}$ via stochastic mixing. That is if for each $F_{i}$,
\begin{equation}
F_{i} = \sum_{j=1}^{n} \Pr\left(F_{i}|E_{j}\right)E_{j}.
\end{equation}
\end{definition}
This order is clearly reflexive: $\vec{E}\geq \vec{E}$ for all POVMs. Since the composition of two column stochastic matrices is also column stochastic, the order is transitive. Thus definition~\ref{def:preorder} specifies a preorder.
Since the gremlin operations in effect implement a stochastic transformation, from here onward, we refer to the `gremlin operations' as `stochastic operations'.

The preorder immediately gives rise to equivalence classes:
\begin{definition}[Equivalence class]\label{def:equivclass}
We say that two POVMs $\vec{E}$ and $\vec{F}$ are \emph{equivalent}, $\vec{E}\sim \vec{F}$ if
\begin{equation}
\vec{F} \leq \vec{E} \quad \text{and}\quad \vec{E} \leq \vec{F}.
\end{equation}
In this case we say that $\vec{E}$ and $\vec{F}$ belong to the same \emph{equivalence class} $\mathcal{E}$.
\end{definition}
We can define the canonical form of any equivalence class, as a POVM without any unnecessary repetitions. This is known as `minimal sufficiency' in \cite{Kuramochi2015}.
\begin{definition}[Canonical Representation]
A \emph{canonical representative} of an equivalence $\mathcal{E}$ is any POVM
$\vec{E}\in \mathcal{E}$ where if $E_{i},E_{j}$ are elements of $\vec{E}$ then
$E_{i}\not\propto E_{j}$ for $i\ne j$.
\end{definition}
Clearly an equivalence class has multiple canonical representatives, all of which are related by permutation matrices. The only equivalence class with a single canonical representative is the class containing the trivial measurement $\left(\mathbbm{1}\right)$.\\ 

The only reversible stochastic operations include making up outcomes, and confusing POVM elements which are proportional to each other. Thus POVMs $\vec{E}$ and $\vec{E^{\prime}}$ are equivalent if each element of $\vec{E}$ is proportional to at least one element of $\vec{E^{\prime}}$ and vice versa. We can hide this unnecessary complexity by considering equivalence classes of POVMs to be the objects of our resource theory.

The preorder arising from definition~\ref{def:preorder} gives rise to a partial order on equivalence classes.

\begin{definition}[Partial Order]\label{def:partialorder}
For equivalence classes $\mathcal{E}$ and $\mathcal{F}$, $\mathcal{E} \succeq \mathcal{F}$ if $\vec{E} \geq \vec{F}$ for any (and thus all) $\vec{E}\in \mathcal{E}$ and $\vec{F}\in\mathcal{F}$.
\end{definition}
In addition to being reflexive and transitive, this order on equivalence classes is anti-symmetric; if $\mathcal{E} \succeq \mathcal{F}$ and $\mathcal{F} \succeq \mathcal{E}$ then $\mathcal{E} = \mathcal{F}$. Hence it is a partial order.

With this partial order we must now identify the \emph{free} resources, i.e. free measurements. 
An equivalence class $\mathcal{I}$ is free if making use of the measurement never provides additional information. 
From a Bayesian perspective this means that our posterior knowledge of the state will always be the same as our a prior knowledge of the state. 
The only measurements that satisfies this are the measurements with elements proportional to the identity operator $\mathbbm{1}$.
These measurements can be implemented by the experimenter ignoring the system and rolling a die to determine the measurement outcome, gaining no information about the system. Hence they are the least resourceful.
These POVMs are all in the same equivalence class with canonical representation $\left(\mathbbm{1}\right)$. 
In this sense the only free resource for this measurement theory of resources is classical probability distributions.

Finally this equivalence class is terminal in the sense that all measurements may be reduced to a free measurement by column stochastic maps. This equivalence class cannot be freely transformed into any other equivalence class.
\begin{proposition}[Free Class is Terminal]\label{prop:TerminalStates}
$\mathcal{E}\succeq \mathcal{I}$ for all $\mathcal{E}$.
\end{proposition}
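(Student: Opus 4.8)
The plan is to exhibit, for an arbitrary equivalence class $\mathcal{E}$, a single gremlin operation of the second kind (confusing outcomes) that sends any representative to the trivial POVM $(\mathbbm{1})$, which is the canonical representative of $\mathcal{I}$. Since Definition~\ref{def:partialorder} only requires the ordering $\vec{E}\geq\vec{F}$ to hold for one (equivalently all) representatives, it suffices to work with an arbitrary $\vec{E}=(E_{1},\dots,E_{n})\in\mathcal{E}$.

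First I would take the $1\times n$ matrix $C=\begin{pmatrix}1 & \cdots & 1\end{pmatrix}$. This is column stochastic (each column sums to $1$) and each column has a single nonzero entry equal to $1$, so it is a deterministic matrix of the form allowed in operation~2 --- it represents the gremlin reporting every outcome as the same single outcome. Applying it to $\vec{E}$ yields the one-element POVM whose sole element is $\sum_{i=1}^{n}E_{i}$, which equals $\mathbbm{1}$ by the completeness relation. In the language of Definition~\ref{def:preorder}, this is the statement that taking $\Pr(F_{1}\mid E_{j})=1$ for every $j$ gives $F_{1}=\sum_{j=1}^{n}\Pr(F_{1}\mid E_{j})E_{j}=\mathbbm{1}$, hence $\vec{E}\geq(\mathbbm{1})$.

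Since $(\mathbbm{1})$ is the canonical representative of the free class $\mathcal{I}$, and the inequality $\vec{E}\geq(\mathbbm{1})$ holds for every $\vec{E}\in\mathcal{E}$, Definition~\ref{def:partialorder} yields $\mathcal{E}\succeq\mathcal{I}$, which is the claim.

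I do not expect any real obstacle: the completeness relation $\sum_{i}E_{i}=\mathbbm{1}$ does all the work, and the only thing to verify is that the single ``collapse everything'' map is of the permitted form (column stochastic with $0/1$ entries), which is immediate. If one wished to present the transformation explicitly as a composition of the two gremlin primitives, one could route it through the earlier Proposition on factorisations $P=CS$, but this is unnecessary since operation~2 alone already includes the all-ones row.
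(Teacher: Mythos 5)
Your proof is correct and is essentially the paper's own argument: confuse all outcomes into one via the all-ones deterministic matrix, use completeness to get $\sum_{i}E_{i}=\mathbbm{1}$, and conclude $\vec{E}\geq(\mathbbm{1})$, hence $\mathcal{E}\succeq\mathcal{I}$. The paper states this in one line; you have simply spelled out the same step in more detail.
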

\begin{proof}
For any POVM we can confuse all elements into the identity $\mathbbm{1}$. So $\vec{E} \geq \left(\mathbbm{1}\right)$, and therefore $\mathcal{E} \succeq \mathcal{I}$. 
\end{proof}

This means that we may dispose of any resource freely, therefore we call this resource theory \emph{waste-free} \cite{2016coecke59}.

We now look at the maximal equivalence classes. These are the the objects that cannot be acquired by freely transforming a more resourceful object.
\begin{definition}[Maximal Objects]
An equivalence class $\mathcal{E}$ is \emph{maximal} if for any $\mathcal{F}$, $\mathcal{F} \succeq \mathcal{E}$ implies $\mathcal{E} = \mathcal{F}$.
\end{definition}

In \cite{Buscemi2005}, the authors call maximal objects, `clean'. They also prove the following proposition.

\begin{proposition}
An equivalence class $\mathcal{E}$ is maximal if and only if all POVMs contain rank-1 elements.
\end{proposition}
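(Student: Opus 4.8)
The plan is to fix a canonical representative $\vec E=(E_1,\dots,E_n)$ of $\mathcal E$ and treat the two directions separately, using two inputs from the text: the characterisation that two POVMs are equivalent exactly when every element of each is proportional to some element of the other (the ``only if'' direction of which follows from the earlier observation that the only reversible gremlin operations are making up outcomes and confusing elements that are proportional), and the elementary fact that any positive semidefinite $X$ with $X\le\lambda\ketbra{v}{v}$ has the form $\mu\ketbra{v}{v}$ with $0\le\mu\le\lambda$.

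\emph{Rank-$1$ implies extremal.} Suppose every $E_i$ has rank $1$, and let $\mathcal G\succeq\mathcal E$ with canonical representative $\vec G=(G_1,\dots,G_m)$, so that $E_i=\sum_j\Pr\left(E_i|G_j\right)G_j$ for a column stochastic matrix with entries $\Pr\left(E_i|G_j\right)$. Fix $i$: each summand $\Pr\left(E_i|G_j\right)G_j$ is positive semidefinite and bounded above by $E_i$, which has rank $1$, so every nonzero summand is a positive multiple of $E_i$; hence $G_j\propto E_i$ whenever $\Pr\left(E_i|G_j\right)>0$. Column stochasticity forces every column $j$ to have $\Pr\left(E_i|G_j\right)>0$ for some $i$, so each $G_j$ is proportional to some $E_i$; and since $E_i\ne0$ is a sum in which every nonzero summand is a positive multiple of $E_i$, at least one summand is nonzero, giving some $G_j\propto E_i$. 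By the equivalence characterisation $\vec G\sim\vec E$, i.e.\ $\mathcal G=\mathcal E$, so $\mathcal E$ is extremal.

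\emph{Extremal implies rank-$1$ (contrapositive).} Suppose some element $E_k$ has rank $r\ge2$, with spectral decomposition $E_k=\sum_{l=1}^{r}\mu_l\ketbra{\phi_l}{\phi_l}$ in an eigenbasis ($\mu_l>0$, the $\ket{\phi_l}$ orthonormal). Put $A=\mu_1\ketbra{\phi_1}{\phi_1}$ and $B=E_k-A=\sum_{l\ge2}\mu_l\ketbra{\phi_l}{\phi_l}$; both are nonzero and positive semidefinite, and $\mathrm{rank}(A)=1<r$ while $\mathrm{rank}(B)=r-1<r$, so neither is proportional to $E_k$. Form the POVM $\vec F=(E_1,\dots,E_{k-1},A,B,E_{k+1},\dots,E_n)$, which still sums to $\mathbbm{1}$. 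Then $\vec F\ge\vec E$, since the gremlin can confuse the two outcomes $A$ and $B$ into $A+B=E_k$ and leave the rest untouched. It remains to show $\vec E\not\ge\vec F$: if instead $\vec E\ge\vec F$ then, together with $\vec F\ge\vec E$, we would have $\vec E\sim\vec F$, so $E_k$ is proportional to some element of $\vec F$; but the elements of $\vec F$ are the $E_i$ with $i\ne k$ (none proportional to $E_k$, as $\vec E$ is canonical) and $A,B$ (neither proportional to $E_k$ by construction) --- a contradiction. Hence $\mathcal F\succeq\mathcal E$ with $\mathcal F\ne\mathcal E$, so $\mathcal E$ is not extremal.

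The only real work is this last step: certifying that the refined measurement $\vec F$ is \emph{strictly} above $\vec E$ rather than merely equivalent to it. I expect this to be the main obstacle, and the clean route through it is to invoke the appendix's irreversibility statement (equivalently, that equivalent POVMs carry the same set of element ``directions''). Arguing it by hand --- ruling out every column stochastic matrix that might reconstruct $\vec F$ from $\vec E$ --- is delicate precisely because $A$ or $B$ could a priori happen to be proportional to some \emph{other} element $E_i$, even though neither can be proportional to $E_k$; routing the argument through equivalence sidesteps exactly this.
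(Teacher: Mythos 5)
Your proof is correct and follows essentially the same route as the paper's: in the forward direction, rank-1 domination forces every contributing element to be proportional, and in the converse a strict refinement is exhibited whose strictness is certified by the appendix irreversibility result. The only cosmetic difference is that the paper's converse splits \emph{every} element into all of its rank-1 spectral pieces at once, whereas you split a single element of rank $\ge 2$ into two non-proportional pieces; your closing worry is unnecessary, since applying the appendix proposition directly to the confusion of $A$ and $B$ yields $A\propto B$ (a contradiction) with no detour through whether $A$ or $B$ might be proportional to some other $E_i$.
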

\begin{proof}
Assume $\vec{E}$ consists of rank-1 elements and $\vec{E} \leq \vec{F}$. We have for all $i\in\left\lbrace 1,\dots,n\right\rbrace$
\begin{equation}
E_{i} = \sum_{j=1}^{m} \Pr\left(E_{i}|F_{j}\right)F_{j}.
\end{equation}
But this can only hold if $F_{j}\propto E_{i}$ for all pairs of $i$ and $j$ where $\Pr\left(E_{i}|F_{j}\right)$ is nonzero. If this is not the case then the rank of $E_{i}$ must be greater than $1$. Hence $\vec{E}$ and $\vec{F}$ are equivalent $\vec{E} \sim \vec{F}$, and $\vec{E}$ is maximal.

Conversely, consider the spectra of the elements of and arbitrary POVM $\vec{E} = \left(E_{1},\dots,E_{n}\right)$,
\begin{equation}
E_{i} = \sum_{j=1}^{d} \lambda_{j}^{i}\ketbra{\lambda_{j}^{i}}{\lambda_{j}^{i}}.
\end{equation}
Now consider the POVM
\begin{equation}
\vec{E^{\prime}} = \left(\lambda_{1}^{1}\ketbra{\lambda_{1}^{1}}{\lambda_{1}^{1}},\dots,\lambda_{d}^{n}\ketbra{\lambda_{d}^{n}}{\lambda_{d}^{n}}\right).
\end{equation}
Clearly $\vec{E}\leq \vec{E^{\prime}}$ since $E_{i}$ is the result of combining all elements of the form $\lambda_{j}^{i}\ketbra{\lambda_{j}^{i}}{\lambda_{j}^{i}}$, and the elements of $\vec{E^{\prime}}$ are all rank-1. So every POVM $\vec{E}$ is the result of a stochastic operation upon a POVM which consists only of rank-1 elements. Furthermore, we showed in appendix~\ref{app:irreversible} that combining POVM elements is reversible if and only if those elements are proportional to each other, that is, if and only if $\vec{E}$ already consists of only rank-1 elements. Therefore, if $\vec{E}$ does not consist of rank-1 elements then $\vec{E^{\prime}}$ is a member of a separate equivalence class and $\vec{E}$ is not maximal.
\end{proof}

\section{Measurement Resource Theory}

The mathematical structure of a resource theories as symmetric monoidal categories was provided by Coecke \emph{et al.} \cite{2016coecke59,Fritz2015} in the language of category theory.
As mentioned in the introduction, the \emph{objects} are the equivalence classes of POVMs and the morphisms are derived from the ordering described in definition~\eqref{def:partialorder}.
That is, there is a \emph{morphism} or \emph{free transformation} from equivalence class $\mathcal{E}$ to $\mathcal{F}$ if $\mathcal{E}\succeq \mathcal{F}$.

Since for all equivalence classes we have $\mathcal{E}\succeq \mathcal{E}$, there exists an morphism from all objects to themselves, which we identify as the \emph{identity morphism} for each object, a requirement of all categories.
The \emph{composition of morphisms} is automatically defined due to the transitive property of the partial order resulting from the composition of column stochastic maps.
These properties make the set of equivalence classes and free transformations a category, which automatically follows from the partial order (any partially ordered set can be formulated as a category).
All that remains is to define the tensor product between equivalence classes to make this category symmetric and monoidal.

We wish to utilise the standard tensor product on linear operators, but generalised to equivalence classes. 
Let $\vec{E}\in\mathcal{E}$ and $\vec{F}\in\mathcal{F}$ be POVMs on Hilbert spaces $\mathcal{H}_{A}$ and $\mathcal{H}_{B}$ respectively. We define the object $\mathcal{E}\otimes\mathcal{F}$ on $\mathcal{H}_{A}\otimes \mathcal{H}_{B}$ as the equivalence class of $\vec{E}\otimes\vec{F}$,
\begin{equation}
\mathcal{E}\otimes\mathcal{F} = \set{\vec{A}\; |\; \vec{E}\otimes\vec{F}\sim \vec{A}}.\label{eq:productmeas}
\end{equation}
Not all POVMs within the equivalence class of $\vec{E}\otimes\vec{F}$ can be written as the tensor product of two POVMs. Similarly, not all equivalence classes of POVMs on combined systems can be expressed as the tensor product of two equivalence classes; an example is the equivalence class containing a Bell measurement on two qubits. Furthermore, equivalence classes which can't be expressed as a tensor product can be reached by the stochastic operation on a tensor product of two classes. Suppose $\vec{A}\otimes\vec{B}$ is a tensor product of POVMs on Hilbert space $\mathcal{H}_{A}\otimes\mathcal{H}_{B}$. We have, for example, implement the transformation
\begin{equation}
\vec{A}\otimes\vec{B}=
\begin{pmatrix}
A_{1}\otimes B_{1} \\
A_{1}\otimes B_{2} \\
A_{2}\otimes B_{1} \\
A_{2}\otimes B_{2}
\end{pmatrix} \rightarrow
\begin{pmatrix}
A_{1}\otimes B_{1} + A_{2}\otimes B_{2} \\
A_{2}\otimes B_{1} + A_{1}\otimes B_{2}
\end{pmatrix},
\end{equation}
and the latter POVM cannot be written as the tensor product of two POVMs on $\mathcal{H}_{A}\otimes\mathcal{H}_{B}$. Nevertheless these equivalence classes are valid measurements and therefore are objects in the resource theory. Indeed there are POVMs within the equivalence class of a tensor product of POVMs that cannot be written as the tensor product of two local POVMs. For example, we can freely (and reversibly) append a $0$ operator to $\vec{A}\otimes \vec{B}$; but after this transformation it cannot be written as the tensor product of two POVMs.  
In order to decide if an equivalence class can be written as a product of two equivalence classes, we define product measurements, analogous to product states of quantum systems.

\begin{definition}[Product measurement]
An equivalence class $\mathcal{A}$ is a \emph{product measurement} on $\mathcal{H}_{A}\otimes\mathcal{H}_{B}$ if there exists a canonical representative of the form $\vec{E}\otimes \vec{F}$, where $\vec{E}\in\mathcal{E}$ is defined on $\mathcal{H}_{A}$ and $\vec{F}\in\mathcal{F}$ is defined on $\mathcal{H}_{B}$. We can then write $\mathcal{A}=\mathcal{E}\otimes\mathcal{F}$.
\end{definition}

The tensor product of equivalence classes provides the \emph{monoidal product} of the category. This leaves defining the unit object (our free measurements) which we may freely append to and discard from any measurement. This is the equivalence class of free POVMs that we defined in proposition \ref{prop:TerminalStates}, $\mathcal{I}$, corresponding to measurements which provide the experimenter with no information about any system that could potentially have been measured. This is for all equivalence classes $\mathcal{A}$,
\begin{equation}
    \mathcal{A}\otimes \mathcal{I} \sim \mathcal{A} \sim \mathcal{I}\otimes \mathcal{A}.
\end{equation}
This provides all the properties needed for our category to be a \emph{monoidal category}.

Finally note that $\mathcal{E}\otimes \mathcal{F}$ and $\mathcal{F}\otimes \mathcal{E}$ are equivalent, since it does not matter which equivalence class is considered `first' or `second'. This equivalence providing the final element of the mathematical structure of a resource theory, the \emph{symmetric} property of the monoidal category, leaving us with a \emph{symmetric monoidal category}.

Having discussed the concept of a product measurement, it is worthwhile defining a reverse procedure: a reduced measurement, analogous to the reduced density matrix.
\begin{definition}[Reduced Measurement]
Suppose $\vec{E}=\left(E_{1},\dots,E_{n}\right)$ is a POVM on Hilbert space $\mathcal{H}_{A}\otimes\mathcal{H}_{B}$. Then the reduced measurement on Hilbert space $\mathcal{H}_{A}$ is
\begin{equation}
\vec{E}_{A} = \frac{1}{d_{B}}\left(\ptr{B}{E_{1}},\dots,\ptr{B}{E_{n}}\right),
\end{equation}
where $d_{B} = \dim\left(\mathcal{H}_{B}\right)$.
\end{definition}
Clearly, $\vec{E}_{A}$ is a valid POVM, since it contains positive semi-definite operators which sum to the identity. However the processes that produces reduced measurements are \emph{not} free transformations. That being said, they are a useful mathematical tool to understand the structure of this resource theory.

The process of reducing measurements is well defined on equivalence classes. That is, if $\vec{E} \sim \vec{E}^{\prime}$ then $\vec{E}_{A} \sim \vec{E}^{\prime}_{A}$. This results from the linearity of the partial trace. Since $\vec{E}^{\prime}$ will have elements of the form $E^{\prime}_{i}\propto E_{k}$ for some element $E_{k}$ in $\vec{E}$, then the reduced measurement $\vec{E}^{\prime}_{A}$ will have elements $E^{\prime}_{A,\,i} \propto \ptr{B}{E_{k}}$. This is within the same equivalence class as $\vec{E}_{A}$ which has elements of the form $\ptr{B}{E_{k}}$, hence the reduced measurement of two POVMs of the same equivalence class stay equivalent. 

It is therefore meaningful to reduce equivalence classes themselves, by reducing all the elements within each equivalence class.
It is easy to see that if $\mathcal{E}\otimes\mathcal{F}$ is a product measurement on Hilbert space $\mathcal{H}_{A}$ and $\mathcal{H}_{B}$, then the reduced measurement on Hilbert space $\mathcal{H}_{A}$ is $\mathcal{E}$.

Having defined the category we can consider the notions of \emph{catalysis} and \emph{purification} in the POVM resource theory of measurements. These are situations where extra resourceful states enable transformations which cannot be freely performed.
\begin{definition}[Catalysis]
An equivalence class $\mathcal{C}$ is a \emph{catalyst} for the transformation $A\succeq B$, if 
\begin{equation}
\mathcal{A} \otimes \mathcal{C} \succeq \mathcal{B}\otimes \mathcal{C} \quad \text{but}\quad \mathcal{A} \nsucceq \mathcal{B}.
\end{equation}
\end{definition}
That is, the resource $\mathcal{C}$ is not consumed, but allows $A$ to be freely transformed into $B$, which it otherwise could not be.
If a resource theory contains no catalysts then it is known naturally enough as \emph{catalysis-free}.
We may also define purification of a resource in a similar manner.
\begin{definition}[Purification]
An object $\mathcal{A}$ may be a \emph{purified} into a more resourceful object $\mathcal{B} \succeq \mathcal{A}$, where $\mathcal{B}\ne \mathcal{A}$, if 
\begin{equation}
\mathcal{A}^{\otimes n} \succeq \mathcal{B}^{\otimes n^{\prime}}\otimes \mathcal{S} .
\end{equation}
In which case it is said to have rate $\frac{n}{n^{\prime}}$.
\end{definition}
So enough copies of $\mathcal{A}$ might allow it to be freely transformed back into some number of copies of $\mathcal{B}$.

We can show however that this resource theory of quantum measurement does not contain either of these two features. They are corollaries of the following theorem.

\begin{theorem}\label{thm:ReducingStochasticMaps}
Suppose $\mathcal{A}\otimes \mathcal{B}$ and $\mathcal{C}\otimes \mathcal{D}$ are product measurements on the Hilbert space $\mathcal{H}_{A}\otimes \mathcal{H}_{B}$ and $\mathcal{A}$ and $\mathcal{C}$ are defined on $\mathcal{H}_{A}$. If $\mathcal{A}\otimes \mathcal{B}\succeq \mathcal{C}\otimes \mathcal{D}$ then $\mathcal{A}\succeq \mathcal{C}$.
\end{theorem}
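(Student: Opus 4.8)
The plan is to transfer the relation $\mathcal{A}\otimes\mathcal{B}\succeq\mathcal{C}\otimes\mathcal{D}$ from $\mathcal{H}_A\otimes\mathcal{H}_B$ down to $\mathcal{H}_A$ by passing to reduced measurements, exploiting two facts already established above: reduction is well defined on equivalence classes, and the reduced measurement of a product measurement $\mathcal{E}\otimes\mathcal{F}$ onto $\mathcal{H}_A$ is exactly $\mathcal{E}$. So it is enough to show that reduction is \emph{monotone} under the order: if $\mathcal{X}\succeq\mathcal{Y}$ for equivalence classes on $\mathcal{H}_A\otimes\mathcal{H}_B$, then $\mathcal{X}_A\succeq\mathcal{Y}_A$. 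Applying this with $\mathcal{X}=\mathcal{A}\otimes\mathcal{B}$ and $\mathcal{Y}=\mathcal{C}\otimes\mathcal{D}$ then yields $\mathcal{A}=(\mathcal{A}\otimes\mathcal{B})_A\succeq(\mathcal{C}\otimes\mathcal{D})_A=\mathcal{C}$ immediately.

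To prove monotonicity of reduction I would fix representatives $\vec{X}\in\mathcal{X}$ and $\vec{Y}\in\mathcal{Y}$ with $\vec{X}\geq\vec{Y}$, which is legitimate by Definition~\ref{def:partialorder}, so that $Y_k=\sum_i\Pr(k|i)X_i$ for some column stochastic matrix. Because the partial trace is linear, applying $\mathrm{Tr}_B$ to both sides and dividing by $d_B$ shows that $\vec{Y}_A$ is obtained from $\vec{X}_A$ by the \emph{same} column stochastic matrix $\Pr(k|i)$; hence $\vec{X}_A\geq\vec{Y}_A$, and since $\vec{X}_A,\vec{Y}_A$ are representatives of $\mathcal{X}_A,\mathcal{Y}_A$ we conclude $\mathcal{X}_A\succeq\mathcal{Y}_A$. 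Unpacked for the case at hand, with product representatives $\vec{E}\otimes\vec{F}\in\mathcal{A}\otimes\mathcal{B}$ and $\vec{G}\otimes\vec{H}\in\mathcal{C}\otimes\mathcal{D}$, the hypothesis reads $G_k\otimes H_\ell=\sum_{i,j}\Pr((k,\ell)\,|\,(i,j))\,E_i\otimes F_j$; taking $\mathrm{Tr}_B$ gives $(\tr H_\ell)\,G_k=\sum_{i,j}\Pr((k,\ell)\,|\,(i,j))(\tr F_j)\,E_i$, and summing over $\ell$ using $\sum_\ell\tr H_\ell=\tr\mathbbm{1}_{\mathcal{H}_B}=d_B$ produces $G_k=\sum_i q(k|i)\,E_i$ with $q(k|i)=\frac{1}{d_B}\sum_{j,\ell}\Pr((k,\ell)\,|\,(i,j))\,\tr F_j$.

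The only delicate step — and the sole place the argument could fail — is checking that $q(k|i)$ really is a column stochastic matrix, i.e. that the outcome probabilities renormalize correctly after tracing out $\mathcal{H}_B$. Nonnegativity is clear; for the column sums one computes $\sum_k q(k|i)=\frac{1}{d_B}\sum_j(\tr F_j)\sum_{k,\ell}\Pr((k,\ell)\,|\,(i,j))=\frac{1}{d_B}\tr(\sum_j F_j)=\frac{1}{d_B}\tr\mathbbm{1}_{\mathcal{H}_B}=1$, where both the completeness relation for $\vec{F}$ and the stochasticity of $\Pr$ are used, and this is precisely why the definition of the reduced measurement carries the factor $1/d_B$. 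Once $q$ is known to be column stochastic, $\vec{E}\geq\vec{G}$ by Definition~\ref{def:preorder}, hence $\mathcal{A}\succeq\mathcal{C}$, as claimed.
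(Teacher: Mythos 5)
Your proposal is correct and follows essentially the same route as the paper's proof: take the partial trace over $\mathcal{H}_B$ of the stochastic-mixture relation, use linearity and $\sum_j \tr F_j = d_B$ to exhibit $\vec{G}$ as a stochastic mixture of $\vec{E}$, and verify the column sums of the induced matrix equal one. Your version is marginally tidier in that you sum over $\ell$ to recover $G_k$ directly rather than arguing via equivalence of the reduced POVM with $\vec{C}$, but the key computation is identical.
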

\begin{proof}
Let $\vec{A}\otimes \vec{B}$ and $\vec{C}\otimes \vec{D}$ be the canonical representatives of $\mathcal{A}\otimes \mathcal{B}$ and $\mathcal{C}\otimes\mathcal{D}$ respectively and suppose that $\vec{A}\otimes \vec{B}\geq \vec{C}\otimes \vec{D}$. This means that each element of $\vec{C}\otimes \vec{D}$ has the form
\begin{equation}
C_{\mu} \otimes D_{\nu} = \sum_{i,j} \Pr(C_{\mu} \otimes D_{\nu}|A_{i}\otimes B_{j}) A_{i}\otimes B_{j}.
\end{equation}
Then the reduced measurement on Hilbert space $\mathcal{H}_{A}$ is equivalent to $\vec{C}$, with elements of the form
\begin{equation}
C_{\mu}\frac{\tr(D_{\nu})}{d_{B}} = \sum_{i,j} \Pr(C_{\mu} \otimes D_{\nu}|A_{i}\otimes B_{j}) \frac{\tr(B_{j})}{d_{B}} A_{i}
\end{equation}
However this is a stochastic mixture of $\vec{A}$, since 
\begin{equation}
\sum_{\mu,\nu} \sum_{j} \Pr(C_{\mu} \otimes D_{\nu}|A_{i}\otimes B_{j}) \frac{\tr(B_{j})}{d_{B}} = 1 \quad \text{for all }\; i.
\end{equation}
\end{proof}
This theorem immediately implies that this resource theory doesn't contain catalysis or purification.
\begin{corollary}[No Catalysis]\label{lem:NoCatalysis}
Suppose $\mathcal{E}\otimes \mathcal{F} \succeq \mathcal{E}^{\prime}\otimes\mathcal{F}$, then $\mathcal{E}\succeq \mathcal{E}^{\prime}$.
\end{corollary}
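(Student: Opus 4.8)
The plan is to obtain this as an immediate specialization of Theorem~\ref{thm:ReducingStochasticMaps}. In that theorem take $\mathcal{A}=\mathcal{E}$, $\mathcal{B}=\mathcal{F}$, $\mathcal{C}=\mathcal{E}^{\prime}$ and $\mathcal{D}=\mathcal{F}$; the catalyst $\mathcal{F}$ simply plays the role of \emph{both} spectator factors at once. The hypotheses of the theorem are met: $\mathcal{E}\otimes\mathcal{F}$ and $\mathcal{E}^{\prime}\otimes\mathcal{F}$ are by construction product measurements on $\mathcal{H}_{A}\otimes\mathcal{H}_{B}$, with $\mathcal{E}$ and $\mathcal{E}^{\prime}$ both defined on $\mathcal{H}_{A}$, and the assumed relation $\mathcal{E}\otimes\mathcal{F}\succeq\mathcal{E}^{\prime}\otimes\mathcal{F}$ is exactly $\mathcal{A}\otimes\mathcal{B}\succeq\mathcal{C}\otimes\mathcal{D}$. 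The conclusion $\mathcal{A}\succeq\mathcal{C}$ then reads $\mathcal{E}\succeq\mathcal{E}^{\prime}$, which is precisely the claim.

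One point to check along the way is that nothing in the proof of Theorem~\ref{thm:ReducingStochasticMaps} secretly relied on $\mathcal{B}$ and $\mathcal{D}$ being distinct. Inspecting that argument, it uses $\tr(B_{j})/d_{B}$ and $\tr(D_{\nu})/d_{B}$ only as nonnegative weights, and the marginalization identity $\sum_{\mu,\nu,j}\Pr(C_{\mu}\otimes D_{\nu}\mid A_{i}\otimes B_{j})\,\tr(B_{j})/d_{B}=1$ for each $i$ holds regardless of whether the $B$'s and the $D$'s coincide, so setting $\vec{D}=\vec{F}=\vec{B}$ is harmless. It is also worth recalling, as established just before the theorem, that reduction of a POVM is well defined on equivalence classes by linearity of the partial trace, so passing to canonical representatives and then reducing onto $\mathcal{H}_{A}$ lands in a well-defined equivalence class; this is the only structural fact imported beyond the theorem statement itself.

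Finally I would spell out why this proves the resource theory is catalysis-free. By the definition of catalysis, an equivalence class $\mathcal{F}$ would be a catalyst for $\mathcal{E}\succeq\mathcal{E}^{\prime}$ precisely when $\mathcal{E}\otimes\mathcal{F}\succeq\mathcal{E}^{\prime}\otimes\mathcal{F}$ while $\mathcal{E}\nsucceq\mathcal{E}^{\prime}$; the corollary shows the first relation already forces $\mathcal{E}\succeq\mathcal{E}^{\prime}$, contradicting the second, so no such $\mathcal{F}$ exists. I do not expect any genuine obstacle here — the entire content sits in Theorem~\ref{thm:ReducingStochasticMaps}, and the corollary is a one-line instantiation plus the observation that the catalyst may appear on both sides. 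The only care needed in the write-up is to state the substitution cleanly so the reader sees that the theorem's hypothesis that the two objects are product measurements is automatically satisfied.
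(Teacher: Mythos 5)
Your proof is correct and is exactly the paper's argument: the corollary is obtained from Theorem~\ref{thm:ReducingStochasticMaps} by the substitution $\mathcal{B}=\mathcal{D}=\mathcal{F}$. The extra checks you note (that the theorem's proof never uses $\mathcal{B}\neq\mathcal{D}$, and the link back to the definition of catalysis) are sensible but add nothing beyond the paper's one-line instantiation.
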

\begin{proof}
This follows from theorem~\ref{thm:ReducingStochasticMaps} in the case $\mathcal{B} = \mathcal{D}$.
\end{proof}
\noindent Thus the POVM resource theory of measurement is said to be catalysis-free.

\begin{corollary}[No Purification]\label{lem:NoPurification}
Suppose $\mathcal{C} \succeq \mathcal{A}$ and
\begin{equation}
\mathcal{A}^{\otimes n} \succeq \mathcal{C}^{\otimes n^{\prime}}\otimes \mathcal{S} ,
\end{equation}
then $\mathcal{A}\succeq \mathcal{C}$; and therefore $\mathcal{A} = \mathcal{C}$.
\end{corollary}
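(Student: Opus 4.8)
The plan is to reduce the whole statement to a single application of Theorem~\ref{thm:ReducingStochasticMaps} followed by antisymmetry of the partial order. First I would unpack the Hilbert spaces: the relation $\mathcal{C}\succeq\mathcal{A}$ already forces $\mathcal{A}$ and $\mathcal{C}$ to act on a common space $\mathcal{H}_A$, and $\mathcal{A}^{\otimes n}\succeq\mathcal{C}^{\otimes n'}\otimes\mathcal{S}$ then forces $\mathcal{H}_A^{\otimes n}\cong \mathcal{H}_A^{\otimes n'}\otimes\mathcal{H}_S$ (so $n\ge n'$ and $\mathcal{S}$ lives on $\mathcal{H}_A^{\otimes(n-n')}$, up to the trivial case $\dim\mathcal{H}_A=1$). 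Then I would write $\mathcal{A}^{\otimes n}=\mathcal{A}\otimes\mathcal{A}^{\otimes(n-1)}$ and $\mathcal{C}^{\otimes n'}\otimes\mathcal{S}=\mathcal{C}\otimes\bigl(\mathcal{C}^{\otimes(n'-1)}\otimes\mathcal{S}\bigr)$, so that the hypothesis reads $\mathcal{A}\otimes\mathcal{B}\succeq\mathcal{C}\otimes\mathcal{D}$ with $\mathcal{B}=\mathcal{A}^{\otimes(n-1)}$ and $\mathcal{D}=\mathcal{C}^{\otimes(n'-1)}\otimes\mathcal{S}$, both acting on the common second factor $\mathcal{H}_A^{\otimes(n-1)}\cong\mathcal{H}_A^{\otimes(n'-1)}\otimes\mathcal{H}_S$.

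The one thing that genuinely needs checking before Theorem~\ref{thm:ReducingStochasticMaps} can be invoked is that $\mathcal{A}^{\otimes n}$ and $\mathcal{C}^{\otimes n'}\otimes\mathcal{S}$ really are \emph{product measurements} in the sense of the definition, i.e.\ that they possess canonical representatives of the requisite tensor form. This is the mildly technical step: if $\vec{A}$ is a canonical representative of $\mathcal{A}$, then $\vec{A}^{\otimes n}$ is again canonical, since $A_{i_1}\otimes\cdots\otimes A_{i_n}\propto A_{j_1}\otimes\cdots\otimes A_{j_n}$ implies, by taking the partial trace over all factors but the $k$-th, that $A_{i_k}\propto A_{j_k}$ for each $k$, hence $i_k=j_k$ by canonicity of $\vec{A}$. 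The identical argument (applied to a canonical representative $\vec{C}$ of $\mathcal{C}$ and one of $\mathcal{S}$) shows $\mathcal{C}^{\otimes n'}\otimes\mathcal{S}$ is a product measurement with the claimed canonical representative. With this in hand, Theorem~\ref{thm:ReducingStochasticMaps} applies verbatim and yields $\mathcal{A}\succeq\mathcal{C}$.

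Finally I would combine $\mathcal{A}\succeq\mathcal{C}$ with the standing hypothesis $\mathcal{C}\succeq\mathcal{A}$ and invoke the antisymmetry of the partial order on equivalence classes noted immediately after Definition~\ref{def:partialorder}, concluding $\mathcal{A}=\mathcal{C}$. I do not expect a serious obstacle: the proof is essentially the No-Catalysis corollary applied once, with $\mathcal{S}$ absorbed into the second tensor factor. The only points requiring care are the Hilbert-space bookkeeping described above and the verification that iterated tensor powers of canonical POVMs remain canonical, so that the product-measurement hypothesis of Theorem~\ref{thm:ReducingStochasticMaps} is legitimately met.
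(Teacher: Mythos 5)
Your proposal is correct and follows essentially the same route as the paper: the paper's proof is precisely the one-line application of Theorem~\ref{thm:ReducingStochasticMaps} with $\mathcal{B}=\mathcal{A}^{\otimes (n-1)}$ and $\mathcal{D}=\mathcal{C}^{\otimes (n'-1)}\otimes\mathcal{S}$, followed by antisymmetry. Your additional verification that tensor powers of canonical representatives remain canonical (so the product-measurement hypothesis of the theorem is genuinely satisfied) is a detail the paper leaves implicit, and it is a worthwhile one to spell out.
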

\begin{proof}
This follows from theorem~\ref{thm:ReducingStochasticMaps} using the substitution $\mathcal{B} = \mathcal{A}^{\otimes n-1}$ and $\mathcal{D} = \mathcal{C}^{\otimes n^{\prime}-1}\otimes \mathcal{S}$.
\end{proof}

\section{Resource Monotones}\label{sec:mrtmonotones}

Onto this structure we can define resource monotones and show that a number of standard measurements of information gain are valid monotones on this resource theory.

\begin{definition}[Resource Monotone]\label{def:monotone}
A \emph{resource monotone} is a function $\mu$ from the objects to the real numbers, whose order respects that of the resource theory:
\begin{equation}
\mathcal{A} \succeq \mathcal{B} \quad \Rightarrow\quad \mu\left(\mathcal{A}\right)\geq \mu\left(\mathcal{B}\right).
\end{equation}
\end{definition}
Monotones can quantify how much more resourceful one object is over another. They also indicate transformations that are impossible.

A monotone on equivalence classes will naturally follow from a monotone $\mu$ on POVMs which satisfies the condition that for any two POVMs $\vec{E} = \left(E_{1},\dots,E_{n}\right)$ and $\vec{F} = \left(F_{1},\dots,F_{m}\right)$, if $\vec{F}$ is a stochastic mixture of $\vec{E}$,
\begin{align}
F_{i} &= \sum_{k=1}^{n} \Pr\left(F_{i}|E_{j}\right)E_{j} \quad \text{for all }i\in\set{1,\dots,m}, \nonumber \\
&\text{then}\quad \mu\left(\vec{E}\right)\geq \mu\left(\vec{F}\right). \label{eq:convexmix}
\end{align}
If this condition is satisfied, then it automatically follows from definition~\eqref{def:equivclass} that $\mu$ will be constant on equivalence classes. Hence we can define $\mu\left(\mathcal{E}\right) := \mu\left(\vec{E}\right)$ where $\vec{E}$ is any element of $\mathcal{E}$, and therefore $\mu\left(\mathcal{E}\right) \geq \mu\left(\mathcal{F}\right)$ if $\vec{E} \in\mathcal{E} \geq \vec{F}\in\mathcal{F}$.

A single monotone usually cannot completely classify the possible transformations of a resource theory, as resource theories are typically only partially ordered or preordered. However this is possible with a family of monotones.
\begin{definition}
A set of monotones $\left\lbrace \mu_{i} | i\in I\right\rbrace$ is a \emph{complete family of resource monotones} if
\begin{equation}
\mu_{i}\left(\mathcal{A}\right)\geq \mu_{i}\left(\mathcal{B}\right) \; \text{for all }i \quad \Rightarrow \quad \mathcal{A} \succeq \mathcal{B}.
\end{equation}
\end{definition}
Coecke \emph{et al.} \cite{2016coecke59} show it is always possible to find a complete family of monotones, although the set may be uncountably infinite. Skrzypczyk and Linden \cite{Skrzypczyk2018} showed that state discrimination games form a complete family of monotones (see section~\ref{sec:qsd}), however this is also an uncountably infinite set.

Here we wish to establish the existence of some resource monotones with respect to the resource theory of quantum measurement. 
Since the free transformations can be irreversible, they can lose information. We might then expect some standard measures of the information gained from a POVM \cite{Winter2004,MWilde2012,Maccone2007,08buscemi210504,Banaszek,Skrzypczyk2018} to be resource monotones. 
We will show that several different types of measures of information gain are monotones. Maccone \cite{Maccone2007} and Buscemi \emph{et al.} \cite{08buscemi210504} define information gain using the Shannon and von-Neuamann entropy functions respectively. In fact the Buscemi measure of information gain is argued to be `\emph{the} information-theoretic measure of information gain of a quantum measurement' \cite{Berta2014}. Banaszek \cite{Banaszek} defines information gain using the fidelity function. We also find that the `robustness of measurement' introduced by Skrzpczyk and Linden \cite{Skrzypczyk2018} is also a monotone on the POVM resource theory of measurement, which is defined in terms of the operator norm. 

Maccone \cite{Maccone2007} defined a measure of information gain for the POVM $\vec{E}=\left(E_{1},\dots,E_{n}\right)$ as the mutual information between the eigenstates of state $\rho$ being measured, and the measurement outcomes. If $\rho$ has the spectral decomposition
\begin{equation*}
\rho = \sum_{k=1}^{d} \lambda_{k}\ketbra{\psi_{k}}{\psi_{k}},
\end{equation*}
where $d$ is the dimension of the quantum system. The Maccone measure of information gain $I_{\text{Mac}}$ can be written in terms of the classical Shannon entropy $H$,
\begin{equation}
I_{\text{Mac}}\left(\rho,\vec{E}\right) = H\left(\lambda\left(\rho\right)\right) - \sum_{i=1}^{m} \Pr\left(E_{i}\right)H\left(\vec{q}_{i}^{\vec{E}}\right).
\end{equation}
where $\lambda\left(\rho\right)$ is the probability vector containing the eigenvalue spectrum of $\rho$, $\Pr\left(E_{i}\right) = \tr{\left(\rho E_{i}\right)}$, and
\begin{equation}
\vec{q}_{i}^{\vec{E}} = \left(\frac{\lambda_{1}\bra{\psi_{1}}E_{i}\ket{\psi_{1}}}{\tr(E_{i}\rho)},\dots,\frac{\lambda_{d}\bra{\psi_{d}}E_{i}\ket{\psi_{d}}}{\tr(E_{i}\rho)}\right).
\end{equation}

We can show that this function decreases under stochastic operations, by invoking \eqref{eq:convexmix}. For a single element of $\vec{q}_{i}^{\vec{F}}$,
\begin{equation}
\begin{split}
\frac{\lambda_{k}\bra{\psi_{k}}F_{i}\ket{\psi_{k}}}{\tr{\left(F_{i}\rho\right)}} &= \sum_{j=1}^{n} \frac{\Pr\left(F_{i}|E_{j}\right)}{\tr(F_{i}\rho)}\lambda_{k}\bra{\psi_{k}}E_{j}\ket{\psi_{k}} \\
&= \sum_{j=1}^{n} \frac{\Pr\left(E_{j}|F_{i}\right)}{\tr(E_{j}\rho)}\lambda_{k}\bra{\psi_{k}}E_{j}\ket{\psi_{k}},
\end{split}
\end{equation}
where we used Bayes' theorem in the last equality. Hence we can write
\begin{equation}
\vec{q}_{i}^{\vec{F}} = \sum_{j=1}^{n} \Pr\left(E_{j}|F_{i}\right) \vec{q}_{j}^{\vec{E}}.
\end{equation}
The monotonicity of this measure then follows from the concavity of the Shannon entropy,
\begin{align}
&H\left(\lambda\left(\rho\right)\right) - \sum_{i=1}^{m} \Pr\left(F_{i}\right)H\left(\vec{q}_{i}^{\vec{F}}\right)\nonumber \\
&\leq H\left(\lambda\left(\rho\right)\right) - \sum_{i=1}^{m}  \sum_{j=1}^{n} \Pr\left(F_{i}\right)\Pr\left(E_{j}|F_{i}\right)H\left(\vec{q}_{j}^{\vec{E}}\right)\nonumber \\
&=H\left(\lambda\left(\rho\right)\right) - \sum_{j=1}^{n} \Pr\left(E_{j}\right)H\left(\vec{q}_{j}^{\vec{E}}\right)\nonumber \\
&=I_{\text{Mac}}\left(\rho,\vec{E}\right).
\end{align}

A similar argument also holds for the measure of information gain proposed by Buscemi \emph{et al.} \cite{08buscemi210504}, again due to the concavity of the von Neumann entropy (see appendix~\ref{app:buscemi}).

In addition to the above entropic measures of information gain we also find that the Banaszek \cite{Banaszek} measure of information gain, is also a resource monotone.
The Banaszek measure of information gain is written in terms of the fidelity function
\begin{equation}
I_{\text{Ban}}(\vec{E}) = \frac{1}{d\left(d+1\right)}\left(d+\sum_{i=1}^{m}\bra{\psi_{i}^{\vec{E}}}E_{i}\ket{\psi_{i}^{\vec{E}}}\right),
\end{equation}
where $\ket{\psi_{i}^{\vec{E}}}$ is the state which maximises $\bra{\psi_{i}^{\vec{E}}}E_{i}\ket{\psi_{i}^{\vec{E}}}$. It is easy to show that this measure is a resource monotone, since
\begin{align}
\frac{1}{d\left(d+1\right)}&\left(d+\sum_{i=1}^{m}\sum_{j=1}^{n}\Pr\left(F_{i}|E_{j}\right)\bra{\psi_{i}^{\vec{F}}}E_{j}\ket{\psi_{i}^{\vec{F}}}\right) \nonumber \\
&\leq \frac{1}{d\left(d+1\right)}\left(d+\sum_{j=1}^{m}\bra{\psi_{j}^{\vec{E}}}E_{j}\ket{\psi_{j}^{\vec{E}}}\right),
\end{align}
since $\ket{\psi_{j}^{\vec{E}}}$ maximises $\bra{\psi_{j}^{\vec{E}}}E_{j}\ket{\psi_{j}^{\vec{E}}}$.

Recent work by Skrzypczyk \emph{et al.} \cite{Skrzypczyk2018} has shown that a measure called the `robustness of measurement' is a monotone for a POVMs. This is defined as,
\begin{equation}
I_{\text{Skr}}(\vec{E}) = \sum_{i=1}^{n} || E_{i} || - 1,
\end{equation}
and due to its monotonic nature under stochastic mixing it is a resource monotone for the POVM resource theory of measurements.

This monotonicity arises from the properties of the operator norm; specifically
\begin{equation}
|| a E || = |a|\,|| E ||, \quad ||E+F|| \leq ||E|| + ||F||.
\end{equation}
Thus
\begin{align}
\sum_{i=1}^{m} ||F_{i}|| &= \sum_{i=1}^{m} || \sum_{j=1}^{n} \Pr\left(F_{i}|E_{j}\right) E_{j} || \nonumber
\\
&\leq \sum_{i=1}^{m}\sum_{j=1}^{n} || \Pr\left(F_{i}|E_{j}\right) E_{j} || \nonumber \\ 
&= \sum_{i=1}^{m}\sum_{j=1}^{n} \Pr\left(F_{i}|E_{j}\right) || E_{j} || = \sum_{j=1}^{n} || E_{j} ||.
\end{align}
So the sum of the operator norms of the POVM elements is a monotone and it is bounded from above by $d$, the Hilbert space dimension, which is saturated only in the case of maximal POVMs; those whose elements consist only of rank-1 operators. It is bounded from below by $1$, which is the norm of the trivial POVM $\left(\mathbbm{1}\right)$. Therefore functions based off the operator norm may give rise to a number of further resource monotones that could be constructed.

Finally, it is common in resource theories to derive a majorisation relation which describes the transformation. The state $\rho$ majorises the state $\sigma$ if
\begin{equation}
\sum_{i=1}^{k} \lambda^{\downarrow}_{i}\left(\rho\right) \geq \sum_{i=1}^{k} \lambda^{\downarrow}_{i}\left(\sigma\right) \, \text{ for all }k\in\set{1,\dots,n}, \label{eq:majorisation}
\end{equation}
where $\lambda^{\downarrow}\left(\rho\right)$ is the vector of eigenvalues of $\rho$ written in non-decreasing order. 
For example, in the resource theory of entanglement, the bipartite pure state $\ket{\psi}$ can be transformed into $\ket{\phi}$ using local operations and classical communication if and only if
\begin{equation}
\ptr{B}{\ketbra{\phi}{\phi}} \succeq \ptr{B}{\ketbra{\psi}{\psi}}
\end{equation}
where we have traced over one subsystem $B$. Majorisation relations are especially useful since their definition \eqref{eq:majorisation} provides a finite, complete set of monotones. We can find a majorisation condition our free transformations on POVMs, however it does not fully characterise the free transformations.
\begin{proposition}
Let $\vec{E}=\left(E_{1},\dots,E_{n}\right)$ and $\vec{F}=\left(F_{1},\dots,F_{m}\right)$ be POVMs.
\begin{equation}
\vec{E} \geq \vec{F}\; \Rightarrow\; \sum_{i=1}^{n} \lambda^{\downarrow}\left(E_{i}\right) \succeq \sum_{i=1}^{m} \lambda^{\downarrow}\left(F_{i}\right).
\end{equation}
\end{proposition}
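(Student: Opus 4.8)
The plan is to deduce the majorisation relation from Ky Fan's maximum principle, which converts partial sums of ordered eigenvalues into variational quantities that interact cleanly with stochastic mixing. Recall that for a Hermitian operator $X$ on a $d$-dimensional Hilbert space, $\sum_{l=1}^{k}\lambda^{\downarrow}_{l}(X) = \max_{\Pi}\tr(\Pi X)$, where the maximum is taken over all rank-$k$ orthogonal projections $\Pi$. Both $\sum_{i}\lambda^{\downarrow}(E_{i})$ and $\sum_{i}\lambda^{\downarrow}(F_{i})$ are vectors with $d$ components (each $\lambda^{\downarrow}(\cdot)$ lists the $d$ eigenvalues, with multiplicity, of an operator on the $d$-dimensional system), so the claimed relation amounts to the $d$ partial-sum inequalities indexed by $k\in\set{1,\dots,d}$.

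First I would dispose of the case $k=d$, where a genuine majorisation relation demands equality: since $\sum_{l=1}^{d}\lambda^{\downarrow}_{l}(X)=\tr(X)$, the left-hand total is $\sum_{i}\tr(E_{i})=\tr(\mathbbm{1})=d$ and the right-hand total is $\sum_{i}\tr(F_{i})=d$, both by the completeness relation for the respective POVMs. Thus the two vectors have equal sums, as required.

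Next, fix $k<d$ and write $a=\sum_{i}\lambda^{\downarrow}(E_{i})$, $b=\sum_{i}\lambda^{\downarrow}(F_{i})$. For each $F_{i}$ pick a rank-$k$ projection $\Pi_{i}$ attaining the maximum, so $\sum_{l=1}^{k}\lambda^{\downarrow}_{l}(F_{i})=\tr(\Pi_{i}F_{i})$. Using $F_{i}=\sum_{j}\Pr(F_{i}|E_{j})E_{j}$ together with linearity of the trace,
\begin{equation*}
\sum_{l=1}^{k}b_{l}=\sum_{i}\sum_{l=1}^{k}\lambda^{\downarrow}_{l}(F_{i})=\sum_{i,j}\Pr(F_{i}|E_{j})\,\tr(\Pi_{i}E_{j}).
\end{equation*}
Now apply Ky Fan in the other direction, as an upper bound: $\tr(\Pi_{i}E_{j})\le\max_{\Pi}\tr(\Pi E_{j})=\sum_{l=1}^{k}\lambda^{\downarrow}_{l}(E_{j})$, since the particular projection $\Pi_{i}$ need not be optimal for $E_{j}$. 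Substituting and using that $\sum_{i}\Pr(F_{i}|E_{j})=1$ for every $j$ (the mixing matrix is column stochastic) collapses the double sum to $\sum_{j}\sum_{l=1}^{k}\lambda^{\downarrow}_{l}(E_{j})=\sum_{l=1}^{k}a_{l}$, which is the desired inequality.

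I do not expect a genuine obstacle here; the two delicate points are merely bookkeeping. One must pad all eigenvalue lists to the common length $d$ so that the comparison between vectors of possibly different numbers of POVM elements is well posed, and one must keep straight that Ky Fan is invoked as an \emph{equality} for the $F_{i}$ (choosing the optimal subspace) but as an \emph{inequality} for the $E_{j}$ (reusing that same, generally suboptimal, subspace). As the closing remark of the proposition signals, the implication cannot be reversed: the majorisation condition sees only the spectra of the $E_{i}$, whereas the gremlin order also constrains their eigenspaces, so one should expect pairs of POVMs with identical eigenvalue profiles that are nonetheless incomparable.
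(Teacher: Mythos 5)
Your proof is correct and follows essentially the same route as the paper: the lemma the paper invokes, $\lambda^{\downarrow}\left(A\right)+\lambda^{\downarrow}\left(B\right) \succeq \lambda^{\downarrow}\left(A+B\right)$, is precisely the Ky Fan subadditivity you re-derive from the maximum principle, and both arguments then close using positive homogeneity of the ordered spectrum together with column stochasticity of the mixing matrix. Your version is marginally more self-contained, since it also makes explicit the equality of total sums at $k=d$ (via the completeness relation), a point the paper leaves implicit.
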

\begin{proof}
The proof relies on the result for Hermitian operators wherein $\lambda^{\downarrow}\left(A\right)+\lambda^{\downarrow}\left(B\right) \succeq \lambda^{\downarrow}\left(A+B\right)$. 

Since $\vec{E} \geq \vec{F}$, they are related by a column stochastic matrix \eqref{eq:convexmix}. Thus we have
\begin{equation}
\sum_{i=1}^{m} \lambda^{\downarrow}\left(F_{i}\right) = \sum_{i=1}^{m} \lambda^{\downarrow}\left(\sum_{j=1}^{n}\Pr\left(F_{i}|E_{j}\right)E_{j}\right).
\end{equation}
From the definition of majorisation \eqref{eq:majorisation} it is easy to see that if $\lambda^{\downarrow}\left(\rho_{1}\right)\succeq \lambda^{\downarrow}\left(\rho_{2}\right)$ and $\lambda^{\downarrow}\left(\sigma_{1}\right)\succeq \lambda^{\downarrow}\left(\sigma_{2}\right)$ then $\lambda^{\downarrow}\left(\rho_{1}\right)+\lambda^{\downarrow}\left(\sigma_{1}\right)\succeq \lambda^{\downarrow}\left(\rho_{2}\right)+\lambda^{\downarrow}\left(\sigma_{2}\right)$. So we have
\begin{align}
\sum_{i=1}^{m} \lambda^{\downarrow}\left(\sum_{j=1}^{n}\Pr\left(F_{i}|E_{j}\right)E_{j}\right) &\preceq \sum_{i=1}^{m} \sum_{j=1}^{n} \lambda^{\downarrow}\left(\Pr\left(F_{i}|E_{j}\right)E_{j}\right)\nonumber \\
&= \sum_{i=1}^{m} \sum_{j=1}^{n} \Pr\left(F_{i}|E_{j}\right) \lambda^{\downarrow}\left(E_{j}\right)\nonumber \\
&= \sum_{j=1}^{n} \lambda^{\downarrow}\left(E_{j}\right).
\end{align}
\end{proof}

It is easy to show that the converse doesn't hold. For if we have $\vec{E} = \left(\ketbra{0}{0},\ketbra{1}{1}\right)$ and $\vec{F} = \left(\ketbra{+}{+},\ketbra{-}{-}\right)$, then $\lambda^{\downarrow}\left(\ketbra{0}{0}\right)+\lambda^{\downarrow}\left(\ketbra{1}{1}\right) \succeq \lambda^{\downarrow}\left(\ketbra{+}{+}\right)+\lambda^{\downarrow}\left(\ketbra{-}{-}\right)$, but clearly $\vec{E}$ cannot be freely transformed into $\vec{F}$.

Hence this majorisation condition does not fully characterise all possible free transformations.

\section{Application: Quantum State Discrimination}\label{sec:qsd}

The task of quantum state discrimination requires using quantum measurements to acquire information from a system, so a less resourceful measurement should be worse in its ability to perform this task. In fact, a there is a growing literature using discrimination games as tasks to distinguish free states from resourceful states, in more general resource theories \cite{Takagi2019,Oszmaniec2019,Carmeli2019,2019skrzypczyk,Uola2019}.

In this setting, Alice sends Bob a quantum state from a predefined alphabet $\set{\rho_{1},\dots,\rho_{k}}$ with prior probabilities $\Pr\left(\rho_{i}\right)$. Bob then seeks the best measurement with which to determine the state sent by Alice with the minimal error (or maximal success); that is he wants to gain the most amount of information to decide which state was sent. In the typical setting, this means finding the POVM which maximises the average probability of Bob's measurement reporting the sent state by Alice
\begin{equation}
\sum_{i=1}^{n} \Pr\left(A_{i}|\rho_{i}\right)\Pr\left(\rho_{i}\right).\label{eq:oldsa}
\end{equation}

If the measurement contains more outcomes than there are states in Alice's alphabet, Bob will decide on the state with the highest posterior probability. Bob should therefore find the POVM which maximises
\begin{equation}
\expect{s_{\vec{A}}} := \sum_{i=1}^{m} \Pr\left(\rho^{\vec{A}}_{i^{*}}|A_{i}\right)\Pr\left(A_{i}\right), \label{eq:sa}
\end{equation}
where $\rho_{i^{*}}^{\vec{A}}$ maximises the posterior distribution of the alphabet states upon receiving the measurement outcome $A_{i}$,
\begin{equation}
\rho_{i^{*}}^{\vec{A}} := \underset{\rho\in \set{\rho_{1},\dots,\rho_{k}}}{\argmax} \; \Pr\left(\rho|A_{i}\right).\label{eq:maxrho}
\end{equation}
With this framework, it is clear that the free transformations reduce Bob's ability to distinguish between the states sent by Alice. 
Once again supposing that $\vec{F}$ is a stochastic mixture of $\vec{E}$, we have
\begin{align}\label{eq:StateDescrimination_Monotone}
\expect{s_{\vec{F}}} &= \sum_{i=1}^{m} \Pr\left(\rho^{\vec{F}}_{i^{*}}|F_{i}\right)\Pr\left(F_{i}\right) = \sum_{i=1}^{m} \Pr\left(F_{i}|\rho^{\vec{F}}_{i^{*}}\right)\Pr\left(\rho^{\vec{F}}_{i^{*}}\right) \nonumber \\
&= \sum_{i=1}^{m} \sum_{j=1}^{n} \Pr\left(F_{i}|E_{j}\right) \Pr\left(E_{j}|\rho^{\vec{F}}_{i^{*}}\right)\Pr\left(\rho^{\vec{F}}_{i^{*}}\right) \nonumber \\
&= \sum_{i=1}^{m} \sum_{j=1}^{n} \Pr\left(F_{i}|E_{j}\right) \Pr\left(\rho^{\vec{F}}_{i^{*}}|E_{j}\right)\Pr\left(E_{j}\right) \nonumber \\
&\leq \sum_{i=1}^{m} \sum_{j=1}^{n} \Pr\left(F_{i}|E_{j}\right) \Pr\left(\rho^{\vec{E}}_{j^{*}}|E_{j}\right)\Pr\left(E_{j}\right) \nonumber \\
&=\sum_{j=1}^{n} \Pr\left(\rho^{\vec{E}}_{j^{*}}|E_{j}\right)\Pr\left(E_{j}\right) = \expect{s_{\vec{E}}}.
\end{align}
We note that using the canonical probability of success \eqref{eq:oldsa}, it would be possible to improve Bob's ability to discriminate between states in Alice's alphabet using a free transformation.

Skrzypczyk and Linden \cite{Skrzypczyk2018} showed that all state discrimination games constitute a complete family of monotones. That is, if $\expect{s_{\vec{E}}}\geq \expect{s_{\vec{F}}}$ \eqref{eq:sa} for all possible alphabets and prior probabilities, then $\vec{E} \geq \vec{F}$. For completeness we reproduce this proof in appendix~\ref{app:games}. The number of possible discrimination games is uncountably infinite, and it is still an open question whether there exists a \emph{finite} complete family of monotones which characterise all transformations between POVMs. 

\section{Conclusion}

The primary task of a quantum measurement is to gain information about a quantum system. With respect to that task, we would intuitively expect rank-1 projective measurements should be considered `maximally resourceful', while simply rolling a die gains no information about the system at all and so is `least resourceful'. 
We have formalised this intuition into a resource theory, by considering the free operations of making up measurement outcomes and confusing outcomes, neither of which improve the ability of a measurement to gain information about the system.
With these two free operations any column stochastic mixture can be
performed on the elements of a POVM. 
The order arising from these free operations automatically gives rise to the basic category structure of a resource theory, leaving only the tensor product and unit object (free resource) structure remaining to be defined.

By devising a formal resource theory for quantum measurements, we can make two observations. First, we have found that catalysis and purification, which are key protocols in entanglement theory, are not possible for quantum measurements under the free operations of this resource theory. Such limitations on measurements may have intriguing implications for other areas of quantum information theory. Second, we have shown that previous proposals for measures of information gain are indeed resource monotones as would be expected and they quantify how much more resourceful one POVM is over another. However, resource theory for POVMs is \emph{not a total order}. This means that a single monotone cannot capture the relationships between measurements and so there cannot exist a single notion of information gain. Acquiring information in quantum mechanics using a measurement is more complex than can be captured by a single measure.

As a practical example, we showed that a reduced ability to gain information from a system implies a reduced ability to discriminate between states in an ensemble. That is, the probability of success in discriminating between quantum states in an ensemble is a monotone for the resource theory of quantum measurements.

This resource has focused on the ability of a quantum measurement to gain information about a system. Of course, the corollary effect of a quantum measurement is the disturbance by the measurement back-action \cite{Maccone2007,08buscemi210504}. It would be an exciting avenue for future work to include the disturbance into the resource theory of quantum measurement \cite{Moreira2019}. For example, two quantum measurements could be equally resourceful from the perspective of gaining information, but one might cause more disturbance to the system than the other, rendering it less valuable.

\begin{acknowledgments}
This research was funded in part by the Australian Research Council Centre of Excellence for Engineered Quantum Systems (Project number CE170100009). 
\end{acknowledgments}

\bibliography{references}

\appendix
\section{Proof: Combining non-proportional POVMs is irreversible}\label{app:irreversible}

Confusing elements in any finite POVM is reversible if and only if the elements are proportional. So the effect of any deterministic matrix can only be undone when it combines two proportional elements. This need only be shown for the case of confusing two elements, since any combination of multiple elements can be decomposed into successive combinations of just two elements.

\begin{proposition}
Suppose $\left( E_{1}+E_{2},E_{3},\dots,E_{n}\right)$ is a POVM, and suppose there exists stochastic coefficients $a_{i,\,j}$ such that for all $i\in\set{1,\dots,n}$,
\begin{equation}
E_{i} = a_{i,\,1}\left(E_{1}+E_{2} \right)+\sum_{j=3}^{n} a_{i,\,j}E_{j}, \label{eq:mix}
\end{equation}
where $a_{i,\,j}\geq 0$ and $\sum_{i=1}^{n} a_{i,\,j}=1$ for all $j$; then $E_{1}\propto E_{2}$. 
\end{proposition}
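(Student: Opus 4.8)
The plan is to prove the stronger statement that $E_1$ and $E_2$ are each nonnegative scalar multiples of $G:=E_1+E_2$; this yields $E_1\propto E_2$ and also disposes of the degenerate case $G=0$, in which $E_1=E_2=0$. So assume $G\neq 0$. I would also delete any redundant element $E_j=0$ with $j\ge 3$: in every hypothesised identity the term $a_{i,j}E_j$ then vanishes, and the identity $0=E_j=a_{j,1}G+\sum_k a_{j,k}E_k$, being a sum of positive semidefinite operators with nonnegative weights, forces the whole $j$-th row of $(a_{i,k})$ to vanish on the surviving indices; thus we may assume every element of $\vec G:=(G,E_3,\dots,E_n)$ is nonzero.

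The key step is to note that the hypothesised column-stochastic ``reversal'' matrix $A=(a_{i,j})$, composed with the deterministic ``confusing'' matrix $C$ that sends $(E_1,\dots,E_n)\mapsto\vec G$ (so $C_{1,1}=C_{1,2}=1$ and $C_{k,k}=1$ for $k\ge 3$, all other entries $0$), produces a column-stochastic matrix $B:=CA$, indexed by $\{1,3,\dots,n\}$, with $B_{1,j}=a_{1,j}+a_{2,j}$, $B_{k,j}=a_{k,j}$ for $k\ge 3$, and $B\vec G=\vec G$. Reading $B$ as the transition matrix of a finite Markov chain, the strictly positive vector $\vec t=(\tr G_i)_i$ is fixed by $B$. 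I would then invoke the standard fact that such a chain has no transient states (if $i$ were transient, $(B^m)_{i,j}\to 0$ for every $j$, forcing $t_i=(B^m\vec t)_i\to 0$, a contradiction), hence every communicating class is closed, and on any such class $R$ the submatrix $B^{(R)}$ is irreducible and column-stochastic with $B^{(R)}\vec t_R=\vec t_R$ and $B^{(R)}\vec G_R=\vec G_R$.

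Let $R$ be the communicating class of the index $1$. Since no class has an incoming edge from outside, $B_{1,j}=a_{1,j}+a_{2,j}=0$ for $j\notin R$, so $E_1$ and $E_2$ are nonnegative combinations of $\{G_j:j\in R\}$ only, and it suffices to show that all $G_j$ with $j\in R$ are mutually proportional. Rescale $B^{(R)}$ by $\widehat B_{i,i'}:=B^{(R)}_{i,i'}\,t_{i'}/t_i$; then $\widehat B$ is irreducible and row-stochastic, $\widehat B\mathbf 1=\mathbf 1$, and $\widehat B\vec y=\vec y$ whenever $B^{(R)}\vec x=\vec x$, with $y_i=x_i/t_i$. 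For an arbitrary unit vector $\ket{\psi}$, put $x_i=\bra{\psi}G_i\ket{\psi}$; the maximum principle for irreducible row-stochastic matrices (a maximal entry of a fixed vector forces all its neighbours, hence by irreducibility all entries, to equal it) shows $\vec y$ is constant, i.e.\ $\bra{\psi}G_i\ket{\psi}/t_i$ is independent of $i\in R$. As $\ket{\psi}$ was arbitrary and a Hermitian operator is determined by its quadratic form, $G_i/t_i$ is a single fixed operator over $i\in R$; in particular $G=G_1\propto G_j$ for all $j\in R$. Therefore $E_1=a_{1,1}G+\sum_{j\in R,\,j\ge 3}a_{1,j}G_j$ is a nonnegative multiple of $G$, and likewise $E_2$, so $E_1\propto E_2$.

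The main obstacle I anticipate is the bookkeeping — fixing the shifted index set of $B$, disposing of zero elements, and making the ``strictly positive fixed vector $\Rightarrow$ all classes closed'' reduction airtight — rather than the analytic content, which reduces to a one-line maximum principle on an irreducible block once the reduction is done.
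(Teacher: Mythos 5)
Your proof is correct, but it takes a genuinely different route from the paper's. The paper argues by elimination: it isolates the $n$-th identity, splits on whether $a_{n,n}=1$ or $a_{n,n}<1$, substitutes the resolved expression for $E_n$ back into the remaining identities, checks that the resulting $(n-1)$-index coefficient array is still column stochastic, and iterates down to the two-element base case $a_{1,1}(E_1+E_2)=E_1$, $a_{2,1}(E_1+E_2)=E_2$. Your argument is instead structural: you observe that the composite $B=CA$ is a column-stochastic matrix fixing $\vec G$, use the strictly positive fixed trace vector to rule out transient states, decompose $B$ into irreducible closed blocks, and run a maximum principle on the block containing $G=E_1+E_2$ to show every $G_j$ in that block is proportional to $G$. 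The steps all check out: the reduction discarding zero elements preserves column stochasticity of the surviving array; the absence of edges into the class of index $1$ correctly gives $a_{1,j}=a_{2,j}=0$ off that class; the similarity rescaling by $t_{i'}/t_i$ does turn the column-stochastic block into an irreducible row-stochastic one fixing $y_i=\bra{\psi}G_i\ket{\psi}/t_i$; and polarization upgrades constancy of the quadratic forms to equality of operators. What the two approaches buy is different. The paper's induction is elementary and self-contained but requires careful bookkeeping at each elimination step. Your argument imports standard Markov-chain facts but, once they are granted, is cleaner and proves more: it shows $E_1$ and $E_2$ are each nonnegative multiples of $E_1+E_2$, and more generally that every element in the communicating class of the confused outcome is proportional to it, which essentially characterizes when a coarse-graining of a POVM can be stochastically reversed. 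Either proof establishes the proposition.
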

\begin{proof}
In the case $n=2$, we have
\begin{align*}
a_{1,\,1}\left(E_{1}+E_{2}\right) &= E_{1}, \\
a_{2,\,1}\left(E_{1}+E_{2}\right) &= E_{2},
\end{align*}
which implies $E_{1}\propto E_{2}$.
We now show that in general, \eqref{eq:mix} always leads to a requirement such as this. More precisely we show that \eqref{eq:mix} implies that $\left(E_{1},E_{2},\dots,E_{n-1}\right)$ is stochastic reshuffling of $\left(E_{1}+E_{2},\dots,E_{n-1}\right)$. Let us consider the $n^{\text{th}}$ equation 
\begin{equation}
E_{n} = a_{n,\,1}\left(E_{1}+E_{2}\right)+\sum_{j=3}^{n} a_{n,\,j}E_{j}. \label{eq:En}
\end{equation}
If $a_{n,\,n}=1$ then since the operators are positive semi-definite, $a_{n,\,j}=a_{j,\,n}=0$ where $j < n$ for all $E_{j}\ne 0$. In the case $E_{j}=0$, we can replace the $j^{\text{th}}$ column of the matrix $a$ with any probability vector and \eqref{eq:mix} will still hold; hence we can assume a choice in which $a_{n,\,j}$ is zero. We therefore can write
\begin{equation}
E_{i} = a_{i,\,1}\left(E_{1}+E_{2} \right)+\sum_{j=3}^{n-1} a_{i,\,j}E_{j}, \label{eq:combine}
\end{equation}
for all $i\in\left\lbrace 1,\dots,n-1\right\rbrace$, where $a_{i,\,j}\geq 0$ and \mbox{$\sum_{i=1}^{n-1} a_{i,\,j}=1$} for all $j$. In other words $\left(E_{1},E_{2},\dots,E_{n-1}\right)$ is stochastic reshuffling of $\left(E_{1}+E_{2},\dots,E_{n-1}\right)$.

Alternatively, if $a_{n,\,n}<1$, we can rewrite \eqref{eq:En},
\begin{equation}
E_{n} = \frac{a_{n,\,1}}{1-a_{n,\,n}}\left(E_{1}+E_{2}\right)+\sum_{j=3}^{n-1} \frac{a_{n,\,j}}{1-a_{n,\,n}}E_{j}.
\end{equation}
Substituting this into \eqref{eq:mix}, we have
\begin{equation}
E_{i} = b_{i,\,1}\left(E_{1}+E_{2}\right)+\sum_{j=3}^{n-1} b_{i,\,j}E_{j},
\end{equation}
for all $i\in\left\lbrace 1,\dots,n-1\right\rbrace$, where 
\begin{equation}
b_{i,\,j} = a_{i,\,j} + a_{i,\,n}\frac{a_{n,\,j}}{1-a_{n,\,n}}.
\end{equation}
Clearly $b_{i,j}\geq 0$, and it is simple to show that \mbox{$\sum_{i=1}^{n-1}b_{i,\,j}=1$}. Thus \eqref{eq:mix} implies that $\left( E_{1},E_{2}, \dots, E_{n-1}\right)$ is a stochastic mixture of $\left( E_{1}+E_{2}, \dots, E_{n-1}\right)$. So we can apply this procedure $n-2$ times and conclude that $\left( E_{1},E_{2}\right)$ is a stochastic mix of $E_{1}+E_{2}$, which as we have already seen, implies $E_{1}\propto E_{2}$.

Any deterministic matrix can be considered a product of matrices whose action is to combine only two elements; hence if a stochastic operation combines a collection of POVM elements, any of which are not proportional to each other, then any subsequent stochastic operation cannot reverse this action.
\end{proof}

\section{Proof: The Buscemi measure of information gain is a resource monotones}\label{app:buscemi}

The Buscemi measure of information gain \cite{08buscemi210504} based upon the von Neumann entropy function, similar to the Maccone measure of information gain \cite{Maccone2007}. The proof that it is a resource monotone follows similarly to that of the Maccone measure, however we include it here for completeness.

The Buscemi measure is based upon an indirect measurement model consisting of four systems: the system $S$ being measured, a purification reference system $R$, a measurement apparatus $A$ and an environment $B$. The information gain is defined by considering the reduced density operator
\begin{equation}
\rho^{RA}=\sum_{k,l=1}^{d}\sum_{i=1}^{m} \sqrt{\lambda_{k}\lambda_{l}}\,\tr(F_{i}\ketbra{\psi_{k}^{S}}{\psi_{l}^{S}})\ketbra{\psi_{k}^{R}}{\psi_{l}^{R}}\otimes\ketbra{i^{A}}{i^{A}}.\label{eq:RA}
\end{equation}
where the system to be measured begins in the state \mbox{$\rho^{S}=\sum_{k=1}^{d} \lambda_{k} \ketbra{\psi_{k}^{S}}{\psi_{k}^{S}}$}.
The Buscemi measure of information gained is the quantum mutual information of this state
\begin{equation}
I_{\text{Bus}}\left(\rho,\vec{F}\right) = S\left(\rho^{R}\right) + S\left(\rho^{A}\right) - S\left(\rho^{RA}\right).
\end{equation}
Since $\rho^{RA}$ \eqref{eq:RA} is in a quantum-classical state, we can similarly rewrite the mutual information as
\begin{equation}
I_{\text{Bus}}\left(\rho,\vec{F}\right) = S\left(\rho^{R}\right) - \sum_{i=1}^{m} \Pr\left(F_{i}\right) S\left(\rho^{R(\vec{F})}_{i}\right),
\end{equation}
where $\Pr\left(F_{i}\right) = \tr(F_{i}\rho)$, and 
\begin{equation}
\rho^{R(\vec{F})}_{i} = \frac{1}{\Pr\left(F_{i}\right)}\sum_{k,\,l=1}^{d} \sqrt{\lambda_{k}\lambda_{l}}\,\tr(F_{i}\ketbra{\psi_{k}^{S}}{\psi_{l}^{S}})\ketbra{\psi_{k}^{R}}{\psi_{l}^{R}}.
\end{equation}
Since $\vec{F}$ is a stochastic reshuffling of $\vec{E}$, we can write $I_{\text{Bus}}$ using Bayes' theorem
\begin{align}
&\sum_{k,\,l=1}^{d}\sum_{j=1}^{n} \frac{\Pr\left(E_{j}|F_{i}\right)}{\Pr\left(E_{j}\right)}\sqrt{\lambda_{k}\lambda_{l}}\,\tr\left(E_{j}\ketbra{\psi_{k}^{S}}{\psi_{l}^{S}}\right)\ketbra{\psi_{k}^{R}}{\psi_{l}^{R}} \nonumber \\
&= \sum_{j=1}^{n} \Pr\left(E_{j}|F_{i}\right) \rho^{R(\vec{E})}_{j}.
\end{align}
The inequality results from the concavity of the von Neumann entropy,
\begin{align}
S\left(\rho^{R}\right) &- \sum_{i=1}^{m} \Pr\left(F_{i}\right) S\left(\rho^{R(\vec{F})}_{i}\right) \nonumber \\
&\leq S\left(\rho^{R}\right) - \sum_{i=1}^{n}\sum_{j=1}^{n} \Pr\left(F_{i}\right)\Pr\left(E_{j}|F_{i}\right) S\left(\rho^{R(\vec{E})}_{j}\right) \nonumber \\
&= S\left(\rho^{R}\right) - \sum_{j=1}^{n} \Pr\left(E_{j}\right)S\left(\rho^{R(\vec{E})}_{j}\right). \nonumber \\
&= I_{\text{Bus}}\left(\rho,\vec{E}\right).
\end{align}

\section{Proof: State discrimination games form a complete family of monotones}\label{app:games}

In this section we show that if \mbox{$\vec{E} = \left(E_{1},\dots,E_{n}\right)$} is more successful than \mbox{$\vec{F}=\left(F_{1},\dots,F_{m}\right)$} at discriminating between any ensemble of quantum states, then $\vec{E}$ can be transformed into $\vec{F}$ using free transformations.

A single quantum state discrimination game is defined with respect to a finite ensemble of quantum states,
\begin{equation}
\mathscr{E} = \left(q_{1}\rho_{1},\dots, q_{k}\rho_{k}\right),
\end{equation}
where
\begin{equation}
\tr{\left(\rho_{i}\right)} = 1 \text{ for all }i, \quad q_{j} \geq 0 \; \text{ for all }j, \quad \sum_{j=1}^{k} q_{j} = 1. \nonumber
\end{equation}
So the trace of $q_{i}\rho_{i}$ provides the probability $q_{i}$ that Alice will send Bob the state $\rho_{i}$.
The probability that Bob will successfully choose the correct state sent by Alice is
\begin{equation}
\expect{s_{\vec{E}}} = \sum_{i=1}^{n} \Pr\left(\rho_{i^{*}}^{\vec{E}}|E_{i}\right)\Pr\left(E_{i}\right), \label{eq:succ}
\end{equation}
where Bob's strategy is to choose the state from the ensemble with the highest posterior probability,
\begin{equation}
\rho_{i^{*}}^{\vec{E}} := \argmax_{\rho\in\set{\rho_{1},\dots,\rho_{k}}} \Pr\left(\rho|E_{i}\right).
\end{equation}
It was shown in section~\ref{sec:qsd} that the probability of success \eqref{eq:succ} for any particular ensemble is a resource monotone. However it was shown in \cite{Skrzypczyk2018} that the function \eqref{eq:succ} considered over the set of all possible ensembles forms a \emph{complete} family of monotones; we reproduce this proof in this appendix.

In fact, for this result we do not require that $\expect{s_{\vec{E}}}\geq \expect{s_{\vec{F}}}$ for all possible finite ensembles. We only only require that this holds for ensembles with the same number of elements as there are in the POVM $\vec{F}$; that is, $k=m$. This implicitly includes all ensembles with fewer elements, as these ensembles can contain up to $m-1$ terms with zero probability of being sent.

Let us here state von Neumann's minimax theorem \cite{V.Neumann1928}, which we intend to invoke.
\begin{theorem}[Minimax]\label{thm:minimax}
Let $X\subseteq \mathbb{R}^{\mu}$ and $Y\subseteq\mathbb{R}^{\nu}$ be compact convex sets. If $f:X\times Y \rightarrow \mathbb{R}$ is a continuous function with the following properties,
\begin{enumerate}
\item $f\left(\cdot,y\right):X\rightarrow \mathbb{R}$ is concave for fixed $y$,
\item $f\left(x,\cdot\right):Y\rightarrow \mathbb{R}$ is convex for fixed $x$,
\end{enumerate}
then
\begin{equation}
\min_{y\in Y} \max_{x\in X} f\left(x,y\right) = \max_{x\in X} \min_{y\in Y}  f\left(x,y\right).
\end{equation}
\qed
\end{theorem}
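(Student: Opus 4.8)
The plan is to reduce the statement to the existence of a \emph{saddle point} of $f$: a pair $(x^{*},y^{*})\in X\times Y$ with $f(x,y^{*})\le f(x^{*},y^{*})\le f(x^{*},y)$ for all $x\in X$ and $y\in Y$. One of the two inequalities in the claimed equality is weak duality and needs nothing beyond continuity and compactness: for any $x'\in X$ and $y'\in Y$, $\min_{y}f(x',y)\le f(x',y')\le\max_{x}f(x,y')$, so taking the supremum over $x'$ of the left side and then the infimum over $y'$ of the right side (both attained, as $X,Y$ are compact and $f$ continuous) gives $\max_{x}\min_{y}f\le\min_{y}\max_{x}f$. Conversely, if $(x^{*},y^{*})$ is a saddle point then $\max_{x}\min_{y}f\ge\min_{y}f(x^{*},y)=f(x^{*},y^{*})=\max_{x}f(x,y^{*})\ge\min_{y}\max_{x}f$, so the whole problem is to produce $(x^{*},y^{*})$.

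To construct the saddle point I would apply Kakutani's fixed-point theorem to the best-response correspondence. Let $\phi$ be the set-valued map on $X\times Y$ given by
\begin{equation}
\phi(x,y)=\Bigl(\argmax_{x'\in X}f(x',y)\Bigr)\times\Bigl(\argmin_{y'\in Y}f(x,y')\Bigr).
\end{equation}
The domain $X\times Y$ is a nonempty compact convex subset of $\mathbb{R}^{\mu+\nu}$. Each value $\phi(x,y)$ is nonempty, by continuity of $f$ together with compactness of $X$ and $Y$; it is convex, because $\argmax_{x'}f(x',y)$ is convex (as $f(\cdot,y)$ is concave) and $\argmin_{y'}f(x,y')$ is convex (as $f(x,\cdot)$ is convex). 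And $\phi$ has closed graph: if $(x_{k},y_{k})\to(x,y)$ and $(x'_{k},y'_{k})\in\phi(x_{k},y_{k})$ with $(x'_{k},y'_{k})\to(x',y')$, then $f(x'_{k},y_{k})=\max_{\xi}f(\xi,y_{k})$, and letting $k\to\infty$ — using continuity of $f$ and of $y\mapsto\max_{\xi}f(\xi,y)$ (Berge's maximum theorem) — gives $f(x',y)=\max_{\xi}f(\xi,y)$, and symmetrically for the second coordinate. Kakutani then yields a fixed point $(x^{*},y^{*})\in\phi(x^{*},y^{*})$, which is exactly the saddle-point condition; combined with weak duality this proves the theorem.

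The hard part is concentrated in two places. First, the regularity of $\phi$ — nonemptiness, convex-valuedness, and above all upper hemicontinuity (closed graph) — is precisely where the hypotheses are spent: compactness and continuity for the first two, and the concave/convex structure for the convexity of the best-response sets. Second, the argument leans on Kakutani's fixed-point theorem, hence ultimately on Brouwer's; a more elementary route, if one wished to avoid fixed-point machinery, is a convex-separation proof — assume $\min_{y}\max_{x}f>\max_{x}\min_{y}f$ strictly, approximate $X$ and $Y$ by finite grids to reduce to a bilinear matrix game, apply the supporting-hyperplane theorem, and pass to the limit — but the Kakutani route is the cleanest. It is worth emphasising that convexity of both $X$ and $Y$ together with the concave/convex shape of $f$ are indispensable: relaxing either gives standard examples in which the two iterated extrema strictly differ.
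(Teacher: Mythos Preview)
Your proof is correct: the weak-duality inequality is immediate, and the Kakutani argument for the saddle point is carried through properly, with the hypotheses used exactly where they are needed (compactness and continuity for nonemptiness and closed graph of the best-response correspondence, concavity/convexity for convex-valuedness).

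However, there is nothing to compare against: the paper does not prove this theorem. It is stated as von Neumann's minimax theorem with a citation and a \qed, and then invoked as a black box inside the proof of the proposition in Appendix~C. So your proposal is not an alternative route to the paper's argument but rather a self-contained proof of a result the paper simply imports. If anything, your Kakutani-based proof is more general than von Neumann's original 1928 argument (which treated the bilinear case) and is one of the standard modern approaches; your remark about the separation-based alternative is also apt.
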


\begin{proposition}
Let \mbox{$\vec{E}=\left(E_{1}, \dots, E_{n}\right)$} and \mbox{$\vec{F}=\left(F_{1},\dots ,F_{m}\right)$} be POVMs. If $\expect{s_{\vec{E}}}\geq\expect{s_{\vec{F}}}$, where $\expect{s_{\vec{E}}}$ is defined as in \eqref{eq:succ},  for all ensembles of a $d$-dimensional quantum system containing $m$ elements, then $\vec{E}\geq \vec{F}$.
\end{proposition}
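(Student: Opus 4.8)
The plan is to prove the contrapositive via a separating-hyperplane / minimax argument. Suppose $\vec{E}\not\geq\vec{F}$, i.e.\ there is no column stochastic matrix $Q$ (an $m\times n$ matrix with $Q_{ij}=\Pr(F_i|E_j)\geq 0$ and $\sum_i Q_{ij}=1$) such that $F_i=\sum_j Q_{ij}E_j$ for all $i$. The set of POVMs reachable from $\vec{E}$ by stochastic mixing, namely $\mathcal{R}(\vec E)=\{(\sum_j Q_{1j}E_j,\dots,\sum_j Q_{mj}E_j): Q \text{ column stochastic}\}$, is a compact convex subset of the real vector space of $m$-tuples of Hermitian operators. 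Since $\vec F\notin\mathcal{R}(\vec E)$, the Hahn--Banach separation theorem gives a linear functional — concretely a tuple of Hermitian operators $(X_1,\dots,X_m)$, equivalently after a positive rescaling a tuple of the form $q_i\rho_i$ — and a scalar $c$ with $\sum_i \tr(X_i F_i) > c \geq \sum_i \tr\big(X_i\,\sum_j Q_{ij}E_j\big)$ for every column stochastic $Q$.

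The second step is to massage the separating functional into a genuine discrimination game. First I would argue that, because the bound must hold for all column stochastic $Q$ including those that route all mass from column $j$ to a single row $i$, the inequality $\sum_i \tr(X_i F_i)>c\geq \max_i \tr(X_i E_j)$ holds for each $j$; taking the worst column shows $\sum_i\tr(X_iF_i) > \sum_j \max_i \tr(X_i E_j)$ after choosing, for each $j$, the optimal target row — this is exactly the statement that Bob, holding $\vec E$, does no better on this ``game'' than Bob holding $\vec F$ does with the identity post-processing. To turn $(X_i)$ into an ensemble I add a multiple of the identity: replace $X_i$ by $X_i+\alpha\mathbbm{1}$ with $\alpha$ large enough that each becomes positive semidefinite; this shifts $\sum_i\tr(X_iF_i)$ and each $\tr(X_iE_j)$ by the same constant $\alpha\,d$ (using $\sum_iF_i=\sum_jE_j=\mathbbm{1}$… careful: it shifts $\sum_i\tr(X_iF_i)$ by $\alpha d$ and each $\tr(X_iE_j)$ by $\alpha\tr(E_j)$, so I instead normalise by working with $\tilde X_i = X_i+\alpha\mathbbm{1}$ inside the per-outcome maximum and checking the inequality survives), then rescale so that $\sum_i q_i=1$ where $q_i=\tr(\tilde X_i)$ and $\rho_i=\tilde X_i/q_i$. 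The ensemble $\mathscr{E}=(q_1\rho_1,\dots,q_m\rho_m)$ then has exactly $m$ elements, as required, and by construction $\expect{s_{\vec E}}$ computed via \eqref{eq:succ} is strictly less than the value achieved by $\vec F$ with trivial post-processing, hence $\expect{s_{\vec E}}<\expect{s_{\vec F}}$.

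The final step is to invoke von Neumann's minimax theorem (Theorem~\ref{thm:minimax}) to identify $\max_x f(x,y)$ (the experimenter's optimisation over post-processings $Q$, a compact convex set, with $f$ linear hence concave in $Q$) with $\min_y$ over a suitable dual variable, so that the separation inequality, which a priori only bounds the ``value of the best post-processing of $\vec E$'', becomes a bound on $\expect{s_{\vec E}}$ itself with the right ordering; this is the technical device that lets us convert ``not stochastically reachable'' into ``strictly beaten in a discrimination game.'' The main obstacle I expect is precisely this conversion: ensuring that the separating functional can be normalised into a valid ensemble of the prescribed cardinality $m$ while keeping the strict inequality, and confirming that the $\argmax$ over posteriors in the definition of $\rho_{i^*}^{\vec E}$ lines up with the row-selection implicit in $\max_i\tr(X_iE_j)$ — i.e.\ that Bob's Bayesian decision rule realises the separating hyperplane rather than merely being bounded by it. The entropic and robustness monotones established earlier are not needed here; the proof is purely a convex-geometry plus minimax argument.
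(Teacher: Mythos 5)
Your argument is correct, but it reaches the conclusion by a genuinely different route from the paper. The paper works forwards from the hypothesis: it rewrites $\expect{s_{\vec{E}}}$ as a maximum over stochastic decision matrices, lower-bounds $\expect{s_{\vec{F}}}$ by the identity post-processing, minimises over ensembles, swaps $\min$ and $\max$ via von Neumann's theorem, and then shows that the resulting $\max_S\min_{\mathscr{E}}\geq 0$ forces $\Delta_i=\sum_j S_{i,j}E_j-F_i=0$ for all $i$, by exhibiting a negative eigenvalue of some $\Delta_i$ (using $\sum_i\Delta_i=0$) and feeding its eigenvector back in as a single-state ensemble. You instead prove the contrapositive by strictly separating $\vec F$ from the compact convex set $\mathcal{R}(\vec E)$ and converting the separating functional into an explicit $m$-element witness game; this is more constructive and makes the geometry transparent, at the cost of the normalisation bookkeeping you flagged. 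That bookkeeping does go through: shifting $X_i\mapsto X_i+\alpha\mathbbm{1}$ adds $\alpha\tr(E_j)$ inside the $j$-th maximum \emph{independently of} $i$, so $\sum_j\max_i\tr\bigl((X_i+\alpha\mathbbm{1})E_j\bigr)=\sum_j\max_i\tr(X_iE_j)+\alpha d$, exactly matching the shift $\alpha d$ on the $\vec F$ side, and dividing by the common positive normalisation preserves the strict inequality (take $\alpha$ large enough that every $q_i>0$). Two small corrections: your intermediate claim that $c\geq\max_i\tr(X_iE_j)$ holds \emph{for each} $j$ does not follow from separation --- only the summed version $c\geq\sum_j\max_i\tr(X_iE_j)$ does, obtained from the deterministic $Q$ routing each column to its best row; fortunately the summed version is all you use. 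And your final step invoking Theorem~\ref{thm:minimax} is unnecessary: once the functional is normalised into an ensemble, $\sum_j\max_i\tr(q_i\rho_iE_j)$ \emph{is} $\expect{s_{\vec E}}$ and $\sum_i\tr(q_i\rho_iF_i)\leq\expect{s_{\vec F}}$ by suboptimality of the identity strategy, so the contradiction is immediate. In finite dimensions the minimax theorem the paper uses is itself a consequence of the separation you invoke, so the two proofs are dual presentations of the same convex-duality fact; excise the last paragraph and yours stands on its own.
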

\begin{proof}
We begin by showing that given an ensemble \mbox{$\left(q_{1}\rho_{1},\dots, q_{m}\rho_{m}\right)$}, the probability of success $\expect{s_{\vec{E}}}$ \eqref{eq:succ} can be re-written
\begin{equation}
\expect{s_{\vec{E}}} = \max_{S} \sum_{i=1}^{m}\sum_{j=1}^{n} S_{i,j} \Pr\left(\rho_{i}|E_{j}\right)\Pr\left(E_{j}\right),
\end{equation}
where the maximisation is over all $m\times n$ stochastic matrices $S$. We can see that the maximand achieves the original definition \eqref{eq:succ} when $S_{i,j} = \delta_{i,j^{*}}$. To show that this is the maximal value, we have by definition \mbox{$\Pr\left(\rho_{i}|E_{j}\right)\leq \Pr\left(\rho_{j^{*}}^{\vec{E}}|E_{j}\right)$} for all states $\rho_{i}$ in the ensemble and all POVM elements $E_{j}$ in $\vec{E}$, so then for any stochastic matrix $S$,
\begin{align}
\sum_{i=1}^{m}\sum_{j=1}^{n} S_{i,j} &\Pr\left(\rho_{i}|E_{j}\right)\Pr\left(E_{j}\right) \nonumber \\ 
&\leq \sum_{i=1}^{m}\sum_{j=1}^{n} S_{i,j} \Pr\left(\rho_{j^{*}}^{\vec{E}}|E_{j}\right)\Pr\left(E_{j}\right)\nonumber \\
&= \sum_{j=1}^{n} \Pr\left(\rho_{j^{*}}^{\vec{E}}|E_{j}\right)\Pr\left(E_{j}\right).
\end{align}
The stochastic matrix $S$ can be thought to encode Bob's decision strategy. For example, if $m=n$, and we choose $S_{i,j}=\delta_{i,j}$, then we arrive at the probability of success whenever Bob guesses that Alice sent the state simply reported by his measurement device \eqref{eq:oldsa}.

Now suppose that $\expect{s_{\vec{E}}} \geq \expect{s_{\vec{F}}}$ for all ensembles containing $m$ elements. For any particular ensemble, this means
\begin{align}
\max_{S} &\sum_{i=1}^{m}\sum_{j=1}^{n} S_{i,j} \Pr\left(\rho_{i}|E_{j}\right)\Pr\left(E_{j}\right) \nonumber \\
&-\max_{S^{\prime}} \sum_{i=1}^{m}\sum_{j=1}^{m} S^{\prime}_{i,j} \Pr\left(\rho_{i}|F_{j}\right)\Pr\left(F_{j}\right)\geq 0.\label{eq:better}
\end{align}
This inequality  still holds if we choose the suboptimal choice of $S^{\prime}_{i,j} = \delta_{i,j}$. From quantum theory we have $\Pr\left(\rho_{i}|E_{j}\right)\Pr\left(E_{j}\right) = \tr{\left(q_{i}\rho_{i}E_{j}\right)}$, and \eqref{eq:better} becomes
\begin{align}
\max_{S}\sum_{i=1}^{m} q_{i} \tr{\left(\rho_{i}\left[\sum_{j=1}^{n} S_{i,j}E_{j} - F_{i}\right]\right)}\geq 0.\label{eq:cond}
\end{align}
Now let us consider the bracketed terms
\begin{equation}
\Delta_{i} = \sum_{j=1}^{n} S_{i,j}E_{j} - F_{i}.
\end{equation}
We note here that if there exists a stochastic matrix $S$ such that $\Delta_{i} = 0$ for all $i$, we have that $\vec{E} \geq \vec{F}$.  Let us now suppose that no stochastic matrix $S$ exists such that $\Delta_{i} = 0$ for all $i$, and show that this assumption leads to a contradiction. 
Now since \eqref{eq:cond} holds for all ensembles, it will hold under the minimisation over all ensembles
\begin{equation}
\min_{\mathscr{E}} \max_{S}\sum_{i=1}^{m} q_{i} \tr{\left(\rho_{i}\left[\sum_{j=1}^{n} S_{i,j}E_{j} - F_{i}\right]\right)}\geq 0. \label{eq:minmax}
\end{equation}

The set of $m\times n$ stochastic matrices is convex and is isomorphic to a subset of $\mathbb{R}^{mn}$. Likewise, the set of all $m$-element ensembles is convex, and isomorphic to a subset of $\mathbb{C}^{m d^{2}}\cong \mathbb{R}^{2md^{2}}$, where $d$ is the dimension of the quantum system being sent by Alice.
The maximand of \eqref{eq:cond} is concave over the set of ensembles, since for any $0\leq t\leq 1$,
\begin{align}
t \sum_{i=1}^{m} &\tr{\left(q_{i}\rho_{i}\Delta_{i}\right)}+(1-t)\sum_{i=1}^{m}\,\tr{\left(\tilde{q}_{i}\tilde{\rho_{i}}\Delta_{i}\right)} \nonumber\\
&=\sum_{i=1}^{m} \,\tr{\left(\left[t q_{i}\rho_{i}+(1-t)\tilde{q}_{i}\tilde{\rho_{i}}\right]\Delta_{i}\right)}.
\end{align}
It is also convex over the set of $m\times n$ stochastic matrices, due to the linearity of the trace function, and since
\begin{align}
t &\left(\sum_{j=1}^{n} S_{i,j}E_{j} - F_{i}\right) + (1-t)\left(\sum_{j=1}^{n} \tilde{S}_{i,j}E_{j} - F_{i}\right) \nonumber \\
&=\left(\sum_{j=1}^{n} \left(t S_{i,j} + (1-t) \tilde{S}_{i,j}\right) E_{j} - F_{i}\right).
\end{align}

Thus we can invoke the minimax theorem (theorem~\ref{thm:minimax})  to reorder the minimisation and maximisation in \eqref{eq:minmax}, after which we have
\begin{equation}
\max_{S}\min_{\mathscr{E}}\sum_{i=1}^{m} q_{i} \tr{\left(\rho_{i}\left[\sum_{j=1}^{n} S_{i,j}E_{j} - F_{i}\right]\right)}\geq 0.\label{eq:maxmin}
\end{equation}
Since $\vec{E}$ and $\vec{F}$ both sum to the identity, we have for any $S$,
\begin{equation}
\sum_{i=1}^{m} \Delta_{i} = \sum_{j=1}^{n} E_{j} - \sum_{i=1}^{m} F_{i} = 0.\label{eq:posSD}
\end{equation}
By assumption not all $\Delta_{i}$ are zero, and this with \eqref{eq:posSD} implies that all $\Delta_{i}$ cannot all be positive semi-definite: there must exist at least one term with a negative eigenvalue (call it $\Delta_{1}$),
\begin{equation}
\Delta_{1}\ket{\lambda} = -\lambda\ket{\lambda}.
\end{equation}
If we consider an ensemble whose only non-zero term is $\ketbra{\lambda}{\lambda}$, then
\begin{equation}
\sum_{i=1}^{m} q_{i}\tr{\left(\rho_{i} \Delta_{i} \right)} = -\lambda \braket{\lambda}{\lambda} < 0,
\end{equation}
which contradicts \eqref{eq:maxmin}. Hence it must be that $\Delta_{i}=0$ for all $i$, which implies that $\vec{E} \geq \vec{F}$.
\end{proof}

\end{document}